% ****** Start of file apssamp.tex ******
%
%   This file is part of the APS files in the REVTeX 4.2 distribution.
%   Version 4.2a of REVTeX, December 2014
%
%   Copyright (c) 2014 The American Physical Society.
%
%   See the REVTeX 4 README file for restrictions and more information.
%
% TeX'ing this file requires that you have AMS-LaTeX 2.0 installed
% as well as the rest of the prerequisites for REVTeX 4.2
%
% See the REVTeX 4 README file
% It also requires running BibTeX. The commands are as follows:
%
%  1)  latex apssamp.tex
%  2)  bibtex apssamp
%  3)  latex apssamp.tex
%  4)  latex apssamp.tex
%
\documentclass[%
 preprint,
%superscriptaddress,
%groupedaddress,
%unsortedaddress,
%runinaddress,
%frontmatterverbose, 
%preprint,
%preprintnumbers,
%nofootinbib,
%nobibnotes,
%bibnotes,
 amsmath,amssymb,
 aps,
pra,
%prb,
%rmp,
%prstab,
%prstper,
%floatfix,
]{revtex4-2}

\usepackage{graphicx}% Include figure files
\usepackage{dcolumn}% Align table columns on decimal point
\usepackage{bm}% bold math
\usepackage{braket}
\usepackage{float}
\usepackage{subcaption}
\usepackage{svg}

\usepackage{graphicx}
\usepackage{tikz}
%\usepackage{graphpap}
%\usepackage[arrow,curve,matrix]{xy}
%\CompileMatrices
\usepackage{hyperref}
\usepackage{hypcap}
\usepackage{pgfplots}
\usepackage{subcaption}
%\usepackage{hyperref}% add hypertext capabilities
%\usepackage[mathlines]{lineno}% Enable numbering of text and display math
%\linenumbers\relax % Commence numbering lines

%\usepackage[showframe,%Uncomment any one of the following lines to test 
%%scale=0.7, marginratio={1:1, 2:3}, ignoreall,% default settings
%%text={7in,10in},centering,
%%margin=1.5in,
%%total={6.5in,8.75in}, top=1.2in, left=0.9in, includefoot,
%%height=10in,a5paper,hmargin={3cm,0.8in},
%]{geometry}
%%%%%%%%%%%%%%%%%%%%%%%%%%%%%%%%%%%%%%%%%%%%%%%%%%%%%%%%%%%%%%%%%%%%%%
% Standard definitions
% Standard envionments
\usepackage{amsthm}
\newtheorem{theorem}{Theorem}

%%%%%%%%%%%%%%%%%%%%%%%%%%%%%%%%%%%%%%%%%%%%%%%%%%%%%%%%%%%%%%%%%%%%%%

\def\innerprod(#1,#2){{\left<#1\,,\,#2\right>}}
\def\Set#1{{\left\{#1\right\}}}

\def\quadtext#1{\quad\textup{#1}\quad}
\def\quadand{\quadtext{and}}
\def\pfrac#1#2{\frac{\partial #1}{\partial #2}}

\usepackage{fancyhdr}
\fancyhf{}
\fancypagestyle{firststyle}
{
   \fancyhf{}
   
\fancyfoot[C]{1}
\fancyfoot[R]{\footnotesize\bf File: \jobname.tex}
}
\pagestyle{plain}

%%%%%%%%%%%%%%%%%%%%%%%%%%%%%%%%%%%%%%%%%%%%%%%%%%%%%%%%%%%%%%%%%%%%%%
% DEFINITIONS
%%%%%%%%%%%%%%%%%%%%%%%%%%%%%%%%%%%%%%%%%%%%%%%%%%%%%%%%%%%%%%%%%%%%%%

\def\I{{\textup{I}}}
\def\II{{\textup{I\!I}}}

\def\F{{\textup{F}}}
\def\H{{\textup{H}}}

\def\ppt{{\boldsymbol p}}

\def\xia{{A}}
\def\MMan{{\cal M}}

\def\dimXi{m}
\def\uvec{{\underline u}}
\def\vvec{{\underline v}}

\def\xM{{x}}
\def\yM{{y}}

\def\tM{{t}}

\def\VE{{\boldsymbol{E}}}
\def\VB{{\boldsymbol{B}}}
\def\VD{{\boldsymbol{D}}}
\def\VH{{\boldsymbol{H}}}
\def\VP{{\boldsymbol{P}}}

\def\uvecT{{\underline{u}}}
\def\uvecF{{\underline{\underline{u}}}}
\def\svecF{{\underline{\underline{s}}}}
\def\kappabar{{\overline\kappa}}
%%%%%%%%%%%%%%%%%%%%%%%%%%%%%%%%%%%%%%%%%%%

\begin{document}

%\preprint{APS/123-QED}

\title{Electromagnetic Boundary Conditions for Space--time Interfaces}

\date{\today}

\author{J. Gratus}
    \affiliation{Physics Department, Lancaster University, Lancaster LA1 4YB, UK\\
    The Cockcroft Institute, Sci-Tech Daresbury,
Warrington
WA4 4AD, UK}
\author{S. A. R. Horsley}
    \affiliation{School of Physics and Astronomy, Stocker Road, University of Exeter, Exeter EX4 4QL, UK}
\author{M. W. McCall}
    \affiliation{Blackett Laboratory, Imperial College, London SW7 2AZ, UK}

%
%: Abstract
%
\begin{abstract}
    We give a general family of electromagnetic boundary conditions applicable to arbitrary space--time interfaces between electromagnetic media, which include the known space--only and time--only boundary conditions as special cases.  These boundary conditions describe a broad class of electromagnetic interfaces, including surfaces in arbitrary motion, ultra-thin (metasurface) media, and cases where the media on one or both sides of the boundary can be both spatially and temporally dispersive. Our approach utilizes 4-dimensional spacetime and addresses the question of how, and in what ways, an electromagnetic field may be connected across a 3-dimensional hypersurface.   
    We show that our proposed boundary conditions are the most general conditions consistent with causality and linearity.
\end{abstract}
\maketitle

%
%: Introduction
%
\section{Introduction}\label{introduction}

Electromagnetic (EM) wave propagation in bulk materials is nearly always described in terms of a set of constitutive relations.  These determine the displacement field, $\boldsymbol{D}$ and magnetic flux density, $\boldsymbol{B}$, in terms of the electric $\boldsymbol{E}$ and magnetic $\boldsymbol{H}$ fields, fixing both the polarization state and wave speed within the medium~\cite{mackay2019}.  However, at the boundary of any such material, we must apply boundary conditions to connect the internal field with the outside world.  When, for example, such a bulk three--dimensional material is reduced to an ultra--thin two dimensional sheet, it becomes awkward to use the constitutive relations at all.  In this case it is simpler to characterize the material in terms of a boundary condition (BC), connecting the field vectors on either side of the sheet.  Although there has been significant work deriving these BCs for thin, static sheets of material (often practically realised as metasurfaces~\cite{achouri2015}), recent work on time--varying metamaterials and metasurfaces~\cite{galiffi2022} has opened up the possibility of more general, space--time interfaces~\cite{taravati2021}.  This leads to the natural question of what BC we should use to characterize a \emph{space--time} metasurface.  In four--dimensional language: in what ways can we connect the EM field across an arbitrary 3--dimensional hypersurface?  Here we answer this question, providing the general set of BCs, special cases of which can connect the field across any space--time interface.
%
%: Figure 1
%
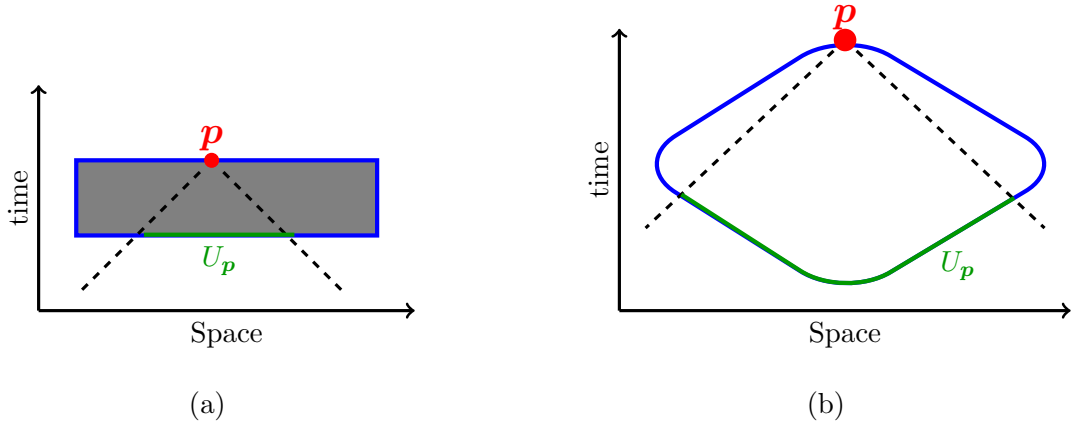
\begin{figure}
\centering
\begin{subfigure}[t]{0.5\textwidth}
\centering
\begin{tikzpicture}
\draw[very thick,->] (0,0) -- node[above,rotate=90] {time} (0,3) ;
\draw[very thick,->] (0,0) -- node[below,rotate=0] {Space} (5,0) ;
\filldraw[ultra thick,color=blue,fill=gray] 
(0.5,1) -- +(0,1) -- 
  node [pos=0.45] (A) {} +(4,1) -- +(4,0)  -- cycle ;
\fill [red] (A) circle (0.1) node[above] {\large$\ppt$}  ;
\draw [very thick,dashed] (A) -- +(-45:2.5) (A) -- +(-135:2.5) ;
\draw [ultra thick,green!60!black] (1.4,1.01) -- node[below]{$U_\ppt$} 
+(2,0.0)  ;
\end{tikzpicture}
\caption{}
\end{subfigure}%
% \begin{subfigure}[t]{0.5\textwidth}
% \centering
% \begin{tikzpicture}
% \draw[very thick,->] (-0.5,-1) -- node[above,rotate=90] {time} +(0,3) ;
% \draw[very thick,->] (-0.5,-1) -- node[below,rotate=0] {Space} +(5,0) ;
% \draw[ultra thick,blue] 
% (2,1.4) to[out=0,in=90]  node [pos=0.0] (A) {} (4,0.7) to[out=-90,in=0] (2,0)  to[out=180,in=-90] (0,0.7) to[out=90,in=180] cycle ;
% \fill [red] (A) circle (0.1) node[above] {\large$\ppt$}  ;
% \draw [very thick,dashed] (A) -- +(-45:2.5) (A) -- +(-135:2.5) ;
% \draw [ultra thick,red!70!black] (2.4,0)  node[below]{$U_\ppt$} 
% to [out=0,in=-165] (3.55,0.1) (2.4,0) to [out=-180,in=-5] (0.85,0.03)    ;
% \end{tikzpicture}
% \caption{}
% \end{subfigure}
%
\begin{subfigure}[t]{0.5\textwidth}
\centering
\begin{tikzpicture}[scale=1.5]
\draw[very thick,->] (0,0) -- node[above,rotate=90] {time} +(0,2.5) ;
\draw[very thick,->] (0,0) -- node[below,rotate=0] {Space} +(4,0) ;
 % \draw[ultra thick,blue] 
 % (2,1.4) to[out=0,in=90]  node [pos=0.1] (A) {} (4,0.7) to[out=-90,in=0] (2,0)  to[out=180,in=-90] (0,0.7) to[out=90,in=180] cycle ;
\draw[ultra thick,blue,rounded corners=20 pt]
   (1.05,0.7) -- (0.1,1.3) -- node[pos=1.0] (A) {} (2,2.5) -- (4.0,1.3) -- (2,0.1) -- cycle;
\draw [very thick,dashed] (A) +(0,-.1) -- ++(-45:2.5) (A) +(0,-.1) -- +(-135:2.5);
\fill [red] (A) +(0,-.1) circle (0.1) node[above] {\large$\ppt$}  ;
  \draw [ultra thick,green!60!black,rounded corners=20pt] (0.55,1.03)  -- (2,0.1) -- (3.5,1.00);
  \draw [green!60!black] (3,0.4) node{$U_{\ppt}$} ;
 %to [out=0,in=-165] (3.55,0.1) 
 %(2.4,0) to [out=-180,in=-5] (0.85,0.03)    %;
 \end{tikzpicture}
 \caption{}
\end{subfigure}

% \begin{subfigure}[t]{0.5\textwidth}
% \centering
% \begin{tikzpicture}
% \draw[very thick,->] (-0.5,-1) -- node[above,rotate=90] {time} +(0,3) ;
% \draw[very thick,->] (-0.5,-1) -- node[below,rotate=0] {Space} +(5,0) ;
% % \draw[ultra thick,blue] 
% % (2,1.4) to[out=0,in=90]  node [pos=0.1] (A) {} (4,0.7) to[out=-90,in=0] (2,0)  to[out=180,in=-90] (0,0.7) to[out=90,in=180] cycle ;
% \draw[thick,rounded corners=8pt]
%   (1.5,0.5) -- (1,1.1) -- (2,2.3) -- (3,1.1) -- (2,-0.1) -- (1.5,0.5) ;
% %\fill [red] (A) circle (0.1) node[above] {\large$\ppt$}  ;
% \draw [very thick,dashed] (A) -- +(-45:2.5) (A) -- +(-135:2.5) ;
% %\draw [ultra thick,red!70!black] (2.4,0)  node[below]{$U_\ppt$} 
% %to [out=0,in=-165] (3.55,0.1) 
% %(2.4,0) to [out=-180,in=-5] (0.85,0.03)    %;
% \end{tikzpicture}
% \caption{}
% \end{subfigure}

\caption{\textbf{Causality and space--time boundaries:}  (a) An initially uniform homogeneous medium simultaneously changes its constitutive parameters within a bounded region (indicated as a grey rectangle), before the constitutive parameters revert back to their original form. The boundary of the grey region (blue/green), $\Sigma$, consists of the two spacelike hypersurfaces arising from the temporal change, plus the spatial boundary of the finite region.  Note that, for a point $\ppt$ on the second spacelike boundary there is only a part of the boundary, $U_\ppt$ (green) which is both in $\Sigma$ \emph{and} in the past lightcone of $\ppt$.  (b) Example showing that the EM BCs need not be applied between two bulk media.  For example, here a surface is uniformly ``switched on'' at a particular time, moves outwards symmetrically in both directions, before moving inwards and finally ``switching off''.  In both cases $U_p$ is not causally connected to $\ppt$ via paths inside $\Sigma$.  This means that information cannot propagate solely within the boundary from $U_\ppt$ to $\ppt$, implying purely local boundary conditions.  Of course information can always propagate through the bulk of the medium, such that the fields at $\ppt$ depend on all points in the past light cone, including $U_\ppt$.
%Again, in general only part of the boundary is causally connected to any other given point $\ppt$ on the boundary. 
%
%To do so would require this information to be stored somehow in the bulk, which would require extra fields. 
\label{fig:schematic}}
\end{figure}

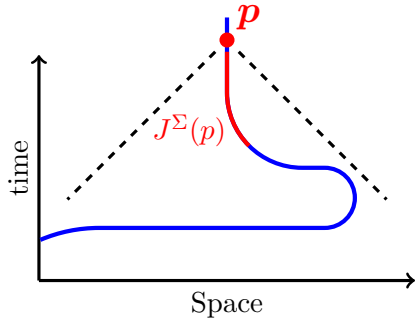
\begin{figure}
\begin{tikzpicture}
\draw[very thick,->] (-0.5,-1) -- node[above,rotate=90] {time} +(0,3) ;
\draw[very thick,->] (-0.5,-1) -- node[below,rotate=0] {Space} +(5,0) ;
\draw[ultra thick,blue] 
 (2,2.5) -- node[pos=0.3] (A){} +(0,-1) arc(-180:-90:1) -- +(0.3,0)  arc(90:-90:0.4) -- +(-3,0)  arc(90:113:2);
\fill [red] (A) circle (0.1) node[above right] {\large$\ppt$}  ;
\draw [very thick,dashed] (A) -- +(-45:3) (A) -- +(-135:3) ;
\draw[ultra thick,red] 
(A)  -- +(0,-0.7)  arc(-180:-90-45:1)  ;
\draw[red] (1.5,1) node {$J^\Sigma(p)$} ;
\end{tikzpicture}
\caption{Even though much of the boundary depicted lies in the past light cone of $p$, only the red part is in $J^\Sigma(p)$.
i.e. surface polaritons which influence $p$ can only be generated in $J^\Sigma(p)$
\label{snakecharmer} }
\end{figure}
%
%: End figure 1
%
Despite originating in earlier work on EM frequency selective surfaces~\cite{munk2005}, the metasurface concept is now `wave agnostic' in the sense that such an effective BC can be applied to modify the propagation of any linear disturbance: acoustic~\cite{assouar2018}, elastic~\cite{chen2022}, electromagnetic~\cite{chen2016}, or even thermal~\cite{wang2021}.    Here we discuss only EM metasurfaces, where a four--dimensional language is most natural.  There are several effective EM BCs that can characterise a metasurface \cite{achouri2021}: sheet transition conditions~\cite{tretyakov2003,wu2018,lebbe2023}, polarizability~\cite{holloway2009,zaluski2016},  or surface impedance~\cite{senior1995,tretyakov2003,maci2011}, with the latter being similar to Robin's BC~\cite{gustafson1998}.  For a static surface, Ref. \cite{lindell2017} considered the most general linear, local BCs that can describe any stationary metasurface. These BCs need not conserve energy, potentially describing PT-symmetric or generally non--Hermitian metasurfaces with designed distributions of loss and gain~\cite{fleury2014,tapar2021,fan2022}.  However, more recently space--time varying surfaces have been discussed and experimentally investigated~\cite{zhang2018,li2020,wang2020}, including quantum effects~\cite{oue2023} and the demonstration of synthetic motion~\cite{harwood2024}.  Here an elementary concept is the space--time interface~\cite{caloz2019a,caloz2019b,caloz2023} (also implicit in Milton's `field patterns'~\cite{milton2017}), and it is not obvious how to adapt the aforementioned BCs to this situation, existing discussions often performing calculations, case--by--case.  For instance, Mostapha et al. \cite{mostafa2024temporal} have given a comprehensive review of the BCs applied to uniform anisotropic media that are abruptly modulated, and Li et al \cite{li2022nonreciprocity} considered nonreciprocity for temporal boundaries in an anisotropic medium, analysing the temporal analog of a Faraday rotator in abruptly switched parameters in a magnetoplasma.  Meanwhile Caloz et al. \cite{caloz2022generalized} considered generalized space-time engineered isotropic metamaterials. Their classification noted that while conservation of  $\bf E$ and $\bf H$ (respectively $\bf D$ and $\bf B$) along purely spatial (temporal) interfaces, the BCs for uniformly moving interfaces modify both spatial and temporal frequencies. They also noted that more general (accelerated) interface motions may be better described within the framework of differential geometry.  Here we use the setting of four--dimensional space--time to find the general set of BCs appropriate for space--time interfaces.

Fig. \ref{fig:schematic} illustrates a typical case, where a set of BCs (the boundary in each case is indicated in blue) must be applied to connect the EM field between different space--time regions where the constitutive parameters may or may not be different.  Note that, unlike a conventional time--independent boundary, we have the interesting possibility that the past light cone of a given point, $\boldsymbol{p}$ on the boundary may be causally connected to only a \emph{portion} of the past boundary that depends on its motion.  Points on a time boundary for instance cannot be causally connected to one another at all.  This is unlike a space--boundary, where every point on the boundary is causally connected to its previous state. 
%%%insert%%%%
We later consider dispersive boundaries where polaritons propagate inside the boundary. Whereas such polaritons are necessarily subluminal, the motion of the spacetime boundaries themselves may be subluminal, luminal or superluminal. The nature of causality when considering such dispersive boundary conditions therefore needs to be assessed.    We note at this stage that when considering such dispersive boundaries, the points influencing the point  $\boldsymbol{p}$ can only be over regions of the spacetime boundary
that are causally connected to p via luminal paths that lie in the boundary - see Fig. \ref{snakecharmer}.

To understand how such a time variation affects the required BCs, take a concrete example: a $y$--polarized electric field $\boldsymbol{E}=E \boldsymbol{e}_{y}$ and $z$--polarized magnetic field $\boldsymbol{H}=H\boldsymbol{e}_{z}$, propagate along the $x$--axis, with a thin polarizable sheet at the time--varying position, $x(t)$.  In this system, we can write the polarization density as $P=\alpha_{\rm E}\delta(x-x(t))E$, where $\alpha_{\rm E}$ is the polarizability.  An idealised experiment could realise this through e.g. inducing a change in the permittivity in a non-linear bulk medium, illuminating it with a sheet of intense laser light and moving this sheet as a function of time.  For these assumed fields and polarization, Maxwell's equations reduce to
\begin{align}
    \frac{\partial E}{\partial x}=&-\mu_0\frac{\partial H}{\partial t}\nonumber\\
    \frac{\partial H}{\partial x}=&-\epsilon_0\frac{\partial E}{\partial t}-\frac{\partial P}{\partial t}=-\epsilon_0\frac{\partial E}{\partial t}-\alpha_{\rm E}\left[\delta(x-x(t))\frac{\partial E}{\partial t}-\dot{x}(t) E\frac{\partial }{\partial x}\delta(x-x(t))\right].
\end{align}
Integrating the second of these equations across the width of the polarizable sheet, $x\in[x(t)-\eta,x(t)+\eta]$, with $\eta\to0$ and taking the derivatives of the electric field on the sheet to be the average value between the two regions, e.g. $\left.\partial_{x}E\right|_{x(t)}=\frac{1}{2}\left(\partial_{x}E^{\I}\vert_{x(t)}+\partial_x E^{\II}\vert_{x(t)}\right)$ we find the following BCs, 
\begin{align}
    \left.E^{\I}\right|_{x(t)}=&\left.E^{\II}\right|_{x(t)}\nonumber\\[10pt]
    \left.\left[H^{\I}+\frac{\alpha_{\rm E}}{2}\left(\frac{\partial E^{\I}}{\partial t}+\dot{x}(t)\frac{\partial E^{\I}}{\partial x}\right)\right]\right|_{x(t)}=&\left.\left[H^{\II}+\frac{\alpha_{\rm E}}{2}\left(\frac{\partial E^{\II}}{\partial t}+\dot{x}(t)\frac{\partial E^{\II}}{\partial x}\right)\right]\right|_{x(t)}.\label{eq:sheet-bc}
\end{align}
Eq. (\ref{eq:sheet-bc}) is equivalent to the statement that the magnetic field is discontinuous, due to the tangential current flowing within the sheet.  The motion of the surface along the trajectory $x(t)$ changes the value of this current, and we gain an additional term proportional to the velocity of the sheet, $\dot{x}$.  For low velocities $|\dot{x}|\ll c$, the magnetic field discontinuity is proportional to the time derivative of the electric field, whereas a superluminal sheet, $|\dot{x}|\gg c$ has a discontinuity proportional to the spatial derivative of the electric field.  What is the general setting for this particular space--time boundary condition? In this paper we will answer this question. 

In section \ref{sec:bcs4d} we review the electromagnetic boundary conditions at purely spatial and temporal boundaries and extend these to arbitrarily moving boundaries and to include additional boundary conditions when the media either side of the boundary are spatially or temporally dispersive. In preparation for discussing general boundary conditions, we express these special cases in tensor notation. In section \ref{Sec:GenBdd Formula} we present our main result, i.e.  linear boundary conditions at a general spacetime boundary that embrace the preceding purely spatial and temporal boundaries as special cases. We demonstrate formally  that our general boundary conditions are the most general possible that are consistent with linearity and causality. Finally, in section \ref{conclusion} we conclude. 
%
%: Section 1 - usual boundary conditions
%
\section{Boundary conditions in four dimensions}
\label{sec:bcs4d}

%
%: Section- Integrating Maxwell's equations
%
\subsection{`Standard' boundary conditions through integrating Maxwell's equations\label{sec:standard}}
From Maxwell's equations one can always prescribe  natural BCs. Through applying a ``pillbox'' of integration to Maxwell's equations in the vicinity of the boundary---as found in standard textbooks~\cite{jackson1962} (see Fig. \ref{naturalbc}b)---a static spatial boundary $\Sigma$ containing zero surface charge and current implies the usual continuity conditions
\begin{align}
\VE^\I_\parallel\big\vert_\Sigma = \VE^\II_\parallel\big\vert_\Sigma
,\quad
\VB^\I_\perp\big\vert_\Sigma = \VB^\II_\perp\big\vert_\Sigma
,\quad
\VH^\I_\parallel\big\vert_\Sigma = \VH^\II_\parallel\big\vert_\Sigma
,\quad
\VD^\I_\perp\big\vert_\Sigma = \VD^\II_\perp\big\vert_\Sigma\,.
\label{Intr_Nat_Bdd_spatial}
\end{align}
where `$\I$' and `$\II$' indicate the space--time regions either side of the boundary and `$\parallel$' and `$\perp$' indicate the components of the vectors in the plane and out of the plane of the interface, respectively.
%
%: Figure 2
%
\begin{figure}[!h]
\centering
\includegraphics[width=1.0\linewidth]{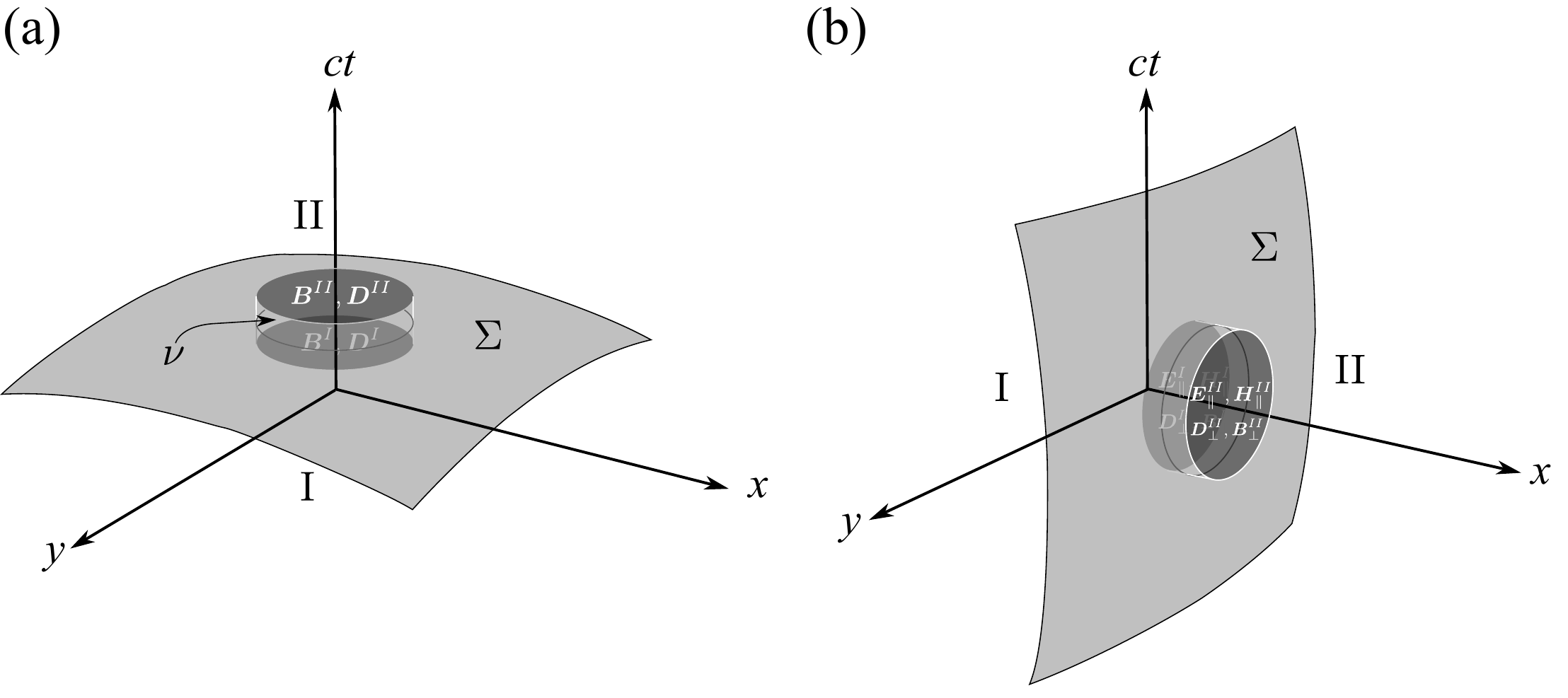}
\caption{\textbf{Spatial and temporal boundaries:} (a) a \emph{temporal} boundary, where the electromagnetic field is related across either side of a spatial hypersurface.  (b) a \emph{spatial} boundary, where the field is related across either side of a surface of constant $x$, resulting in the standard boundary conditions, Eq. \eqref{Intr_Nat_Bdd_spatial}.
There has to be confusion in the language when talking about boundaries. A temporal boundary corresponds to a spatial hypersurface, while a spatial boundary corresponds to a hypersurface, which contains time and 2 dimensions of space. 
\label{naturalbc}}
\end{figure}
%
%: End figure 2
%
Given that $\VE^\I$ is only defined in region $\I$ and the boundary while $\VE^\II$ is only defined in region $\II$ and the boundary, the only place where both $\VE^\I_\parallel$ and $\VE^\II_\parallel$ are defined is \emph{on} the boundary $\Sigma$. Thus we introduce the simplified notation where we drop the $\big\vert_\Sigma$, and we implicitly mean evaluated on the boundary. Thus \eqref{Intr_Nat_Bdd_spatial} will be written
\begin{align}
\VE^\I_\parallel = \VE^\II_\parallel
,\quad
\VB^\I_\perp = \VB^\II_\perp
,\quad
\VH^\I_\parallel = \VH^\II_\parallel
,\quad
\VD^\I_\perp = \VD^\II_\perp\,.\label{eq:space_boundary}
\end{align}
%{\color{red}[Surface currents]}
An example not often found in textbooks is a temporal boundary, as shown in Fig. \ref{naturalbc}a, where, for example the entire spatial volume changes its constitutive parameters at an instant in time.  In this case the relation between the fields either side of this boundary can be found through integrating the Maxwell equations over an infinitely thin pillbox--like space--time region containing the boundary, like that shown in Fig. \ref{naturalbc}a, yielding
\begin{align}
    \int{\rm d}t\int{\rm d}^{3}\boldsymbol{x}\,\boldsymbol{\nabla}\times\boldsymbol{E}=-\int_{\Sigma}{\rm d}^{3}\boldsymbol{x}\left(\boldsymbol{B}^{\II}-\boldsymbol{B}^{\I}\right)=0\nonumber\\
    \int{\rm d}t\int{\rm d}^{3}\boldsymbol{x}\,\boldsymbol{\nabla}\times\boldsymbol{H}=\int_{\Sigma}{\rm d}^{3}\boldsymbol{x}\left(\boldsymbol{D}^{\II}-\boldsymbol{D}^{\I}\right)=0
\end{align}
As the shape of the integration domain on the boundary $\Sigma$ is arbitrary, this implies the continuity of the full magnetic flux density and displacement field vectors,
\begin{align}
\VB^\I = \VB^\II
,\quad
\VD^\I = \VD^\II
\label{Intr_Nat_Bdd_temporal}
\end{align}
These are the same boundary conditions as discussed as in e.g Ref.~\cite{galiffi2022}.  The same approach---integrating over a thin ``pillbox'' enclosing the boundary, $\Sigma$---can be applied to any space--time boundary, whatever the orientation, resulting in intermediate cases where linear combinations of $\boldsymbol{B}$ and $\boldsymbol{E}$, and $\boldsymbol{D}$ and $\boldsymbol{H}$ are continuous across the boundary~\cite{caloz2022generalized}.

The standard BCs, given in Eq. \eqref{Intr_Nat_Bdd_spatial}, can be augmented by including a surface charge and surface current to the last two equations. The nature of surface currents and how they relate to the fields is another topic, outside the scope of this article. They can be prescribed, in which case the BCs will change from being linear homogenous to linear inhomogeneous. By contrast some surface currents are linearly related to the electromagnetic fields on either side of the boundary. In general, these could be absorbed into the effective linear BCs,  for example, the moving polarization density given in Eq. \eqref{eq:sheet-bc}.

% Both the spatial and temporal cases considered above assumed the absence of surface currents at the boundary. We have seen in the example of section \ref{introduction} (Eq. \eqref{eq:sheet-bc}) that a tangential surface current can arise as a result of a moving polarization density. It is therefore necessary to generalize the boundary conditions considered in this section, to ones that include polarization fields present in each medium. This will be considered next.
%
%: Section - dispersion
%
\subsection{Additional boundary conditions and dispersive materials\label{sec:dispersion}}
The above equations, (\ref{eq:space_boundary}) and (\ref{Intr_Nat_Bdd_temporal}) (and their generalization to space--time boundaries) need to be supplemented with additional conditions if the media are temporally or spatially dispersive.  For a spatial boundary between spatially dispersive (sometimes called ``non--local'') media, these extra conditions are known as additional boundary conditions (ABCs), and have been studied since the pioneering work of Pekar~\cite{pekar1957}.  It is less well appreciated that analogous conditions are also required for a temporal boundary between temporally dispersive materials, although this was pointed out in~\cite{gratus2021temporal}.

Taking the aforementioned case of a temporal interface between temporally dispersive media (as in Fig. \ref{naturalbc}a), we assume that both sides of the boundary have a polarization field, $\boldsymbol{P}^\I(t,x)$ and $\boldsymbol{P}^\II(t,x)$, which in each region satisfies the equation of damped simple harmonic motion, e.g.
\begin{align}
\partial_t^2 \boldsymbol{P}^\I + \lambda_\I\partial_t \boldsymbol{P}^\I + \alpha_\I^2 \boldsymbol{P}^\I = \chi_\I \boldsymbol{E}^{\I}
\label{Intr_Temp_disp}
\end{align}
and likewise in the second region, for $\boldsymbol{P}^\II$, with corresponding $\lambda_\II$, $\alpha_\II$ and $\chi_\II$. If---as is the typical case in e.g optics---there were just region $\I$ for all time, one could take the Fourier transform, giving the familiar frequency domain relation
\begin{align}
\tilde{\boldsymbol{P}}^\I(\omega,\boldsymbol{x}) =
\frac{\chi_\I }{-\omega^2 + i\lambda_\I \omega + \alpha_\I^2} \tilde{\boldsymbol{E}}^\I
\label{Intr_Temp_disp_Fourier}
\end{align}
However, this Fourier transform is no longer possible when there is an abrupt change in the material properties.  In order for Eq. (\ref{Intr_Temp_disp}) to be well defined across the boundary $\Sigma$, the polarization and its first derivative must be continuous,
\begin{align}
\boldsymbol{P}^\I = \boldsymbol{P}^\II
\quadand
\partial_t \boldsymbol{P}^\I = \partial_t \boldsymbol{P}^\II\,.
\label{Intr_Nat_Bd_Temp_disp}
\end{align}
For a temporal boundary we must use these conditions in addition to the continuity of the electromagnetic field (\ref{Intr_Nat_Bdd_temporal}) arising from integrating Maxwell's equations over the boundary.  As mentioned above, these additional boundary conditions are the temporal counterpart of those required at a \emph{spatial} boundary between \emph{spatially} dispersive media.  In the case of spatially and temporally dispersive materials, the driven simple harmonic oscillator equation of motion for the polarization field (\ref{Intr_Temp_disp}) is usually supplemented with an additional term containing second order spatial derivatives of the polarization, 
\begin{align}
\partial_t^2 \boldsymbol{P}^\I + \lambda_\I\partial \boldsymbol{P}^\I + \alpha_\I^2 \boldsymbol{P}^\I - \beta_\I^2 \partial_x^2 \boldsymbol{P}^\I= \chi_\I \boldsymbol{E}
\label{Intr_Spatial_Disp}
\end{align}
and likewise for $\boldsymbol{P}^\II$ where $\beta_\II$ is interpreted as the polariton velocity.  This modification results in polarization waves within the material.  In order that Eq. (\ref{Intr_Spatial_Disp}) is well defined we must have the analogous continuity of the polarization and its first spatial derivative,
\begin{align}
\boldsymbol{P}^\I = \boldsymbol{P}^\II
\quadand
\partial_x \boldsymbol{P}^\I = \partial_x \boldsymbol{P}^\II
\label{Intr_Nat_Bd_Spac_disp}
\end{align}
Comparing the non dispersive and dispersive cases for either temporal or spatial boundaries we can see that we must either use Eqns. (\ref{eq:space_boundary}) or (\ref{Intr_Nat_Bdd_temporal}) in isolation,  or---when dispersion is important---supplement these with the additional conditions on the polarization given in Eqns. (\ref{Intr_Spatial_Disp}) or (\ref{Intr_Nat_Bd_Spac_disp}).

There are also examples of spatially dispersive media where no additional boundary conditions are required.  As an example of such a spatiotemporally dispersive boundary condition, take a static thin sheet at $z=0$, within which there is a confined polarization oriented along the $x$--axis, $P_x(t,x,y,z)=\delta(z)\mathcal{P}_x(t,x,y)$.  Within the sheet we assume this in--plane component of the polarization satisfies the following second order differential equation
%[
\begin{align}
\partial_{t}^2 \mathcal{P}_x + 2\lambda_D \partial_t \mathcal{P}_x +
(\omega_0^2\beta^2+\lambda_D^2) \mathcal{P}_x - \beta^2 \partial_x^2 \mathcal{P}_x = \chi_0 E_x
\label{Disp_DE}
\end{align}
Performing a Fourier transform of both sides of Eq. (\ref{Disp_DE}), we can express the inhomogeneous solution to \eqref{Disp_DE} in Fourier space as
%[
\begin{align}
\tilde{\mathcal{P}}_x(\omega,k_x,k_y)
=
\frac{\chi_0 \tilde{E}_x(\omega,k_x,k_y) }
{-\omega^2-2i\lambda_D \omega + (\omega_0^2\beta^2+\lambda_D^2)+\beta^2 k_x^2}
\label{Disp_FT}
\end{align}
which can be understood as a generalization of the usual Lorentz model.  After an inverse Fourier transform of Eq. (\ref{Disp_FT}), we have a convolution in space and time
%[
\begin{align}
\mathcal{P}_x(\tM,\xM,\yM)
=
\int d\tM' \int d\xM' \int d\yM'
\chi(\tM-\tM',\xM-\xM',\yM-\yM') \
E_x(\tM',\xM',\yM')
\label{Disp_Green_Int}
\end{align}
%]
where the kernel of the integral (the in--plane susceptibility) can be found via inverse Fourier transform,
\begin{align}
    \chi(t,x,y)&=\delta(y)\int_{-\infty}^{\infty}\frac{{\rm d}\omega}{2\pi}\int_{-\infty}^{\infty}\frac{{\rm d}k}{2\pi}\frac{{\rm e}^{i[kx-\omega t]}}{-\omega^2-2i\lambda_{D}\omega+(\omega_0^2\beta^2+\lambda_D^2)+\beta^2 k^2}\nonumber\\[10pt]
    &=\frac{1}{2\beta}\delta(y){\rm e}^{-\lambda_{D}t}\Theta(\beta t-|x|)J_0\left(\omega_0\sqrt{\beta^2t^2-x^2}\right)
\end{align}
where we have evaluated the integral using Cauchy's residue theorem and the integral representation of the Hankel function~\cite{gradshteyn}, $J_0$ is the zeroth order Bessel function of the first kind, and $\Theta$ is the Heaviside step function.  The current within the sheet is given by $j_x=\delta(z)\partial_t\mathcal{P}_x$, and using the pillbox integration argument described in Sec. \ref{sec:standard}, the boundary conditions relating the fields in region $\I$ $(z<0)$ and region $\II$ $(z>0)$ are given by the standard boundary conditions
\begin{align}
E^\II_x=E^\I_x
,\quad
E^\II_y=E^\I_y
,\quad
B^\II_z=B^\I_z
,\quad
H^\II_x=H^\I_x
,\quad
H^\II_y=H^\I_y- \partial_t \mathcal{P}_x
\quadand
D^\II_z=D^\I_z
\label{Disp_BC}
\end{align}
%]
(note that in this case there is no need for additional boundary conditions on the polarization field, due to the confinement of the polarization to propagate within the sheet).  Due to the space--time convolution in Eq. (\ref{Disp_Green_Int}) the boundary conditions in (\ref{Disp_BC}) are non--local, but are local to the boundary.  

 It is thus clear that in the general case, the number of boundary conditions is not fixed, and need not be local either. There has to be a sufficient number of boundary conditions for the number of fields, and their derivatives on either side of the boundary. Since we do not want to limit the number of boundary conditions, here we specify simply that there are some number $\dimXi$ of them.  This number increases as the spatial and temporal dispersion becomes more complicated, involving e.g. equations of motion with higher order spatial or temporal derivatives.
%
%: Section -four dimensional notation
%
\subsection{Examples of boundary conditions in four--dimensions\label{sec:4d-bc}}
We see that the nature of the above boundary conditions depends not only on the importance of dispersion, but also on the  on the nature of the boundary and it is not immediately obvious how either conditions (\ref{eq:space_boundary}) and (\ref{Intr_Nat_Bdd_temporal}), or (\ref{Intr_Spatial_Disp}) and (\ref{Intr_Nat_Bd_Spac_disp}) should be adapted to the general case.  For instance, these conditions become more complicated if the boundary is moving, as discussed briefly in the introduction.  Through exploiting the spacetime notation of special and general relativity, the above equations can be naturally combined into a single equation, valid for any space--time interface.

As discussed in many texts on classical electromagnetism (see e.g.~\cite{jackson1962} or~\cite{volume2}), the electromagnetic field vectors $\VE$ and $\VB$ are components of a single rank 2 anti-symmetric tensor $F_{\mu\nu}=-F_{\nu\mu}$, where 
\begin{align}
E_i=c\, F_{0i}
\quadand 
B_i=-\tfrac{1}{2}\epsilon_{0ijk}\,\omega\, g^{j\kappa}\, g^{k\lambda}\, F_{\kappa\lambda}
\label{Intr_EB_F}
\end{align}
where 
\begin{align}
    \omega = \sqrt{|\det(g_{\mu\nu})|}
    \label{Intr_measure}
\end{align}
is the measure for the metric tensor $g_{\mu\nu}$. The signature of metric is taken as $(-,+,+,+)$. Note that for Minkowski spacetime with standard Cartesian coordinates, $\omega=1$. As a result, in Minkowski spacetime, $\omega$ also corresponds to the Jacobian associated with the coordinate transformation from Cartesian coordinates to the new coordinates. 
Latin indices range over the spatial coordinates and Greek over the space--time coordinates, and $\epsilon_{\mu\nu\sigma\tau}$ is the unit anti--symmetric tensor, equal to $+1$ for an even permutation of $\{0,1,2,3\}$, $-1$ for an odd permutation, and $0$ otherwise.  Likewise the excitation fields $\VH$ and $\VD$ are encoded in a single anti-symmetric tensor $H_{\mu\nu}$ where
\begin{align}
H_{i}=H_{0i}
\quadand 
D_{i}=\tfrac{1}{2c}\,\epsilon_{0ijk}\, \omega\, g^{j\kappa}\, g^{k\lambda} \,H_{\kappa\lambda}
\label{Intr_DH_H}
\end{align}
Using these two four--dimensional tensors, $F_{\mu\nu}$ and $H_{\mu\nu}$, Maxwell's equations become two simple relations,
%[
\begin{align}
\partial_{[\mu} F_{\nu\rho]} = 0
\quadand
\partial_{[\mu} H_{\nu\rho]} = \epsilon_{\mu\nu\rho\sigma}\, \omega\, J^\sigma
\label{Inro_Max_FH}
\end{align}
%]
where the square brackets indicate an anti-symmetrization with respect to the enclosed space--time indices.  By using this spacetime tensor notation, a single equation can encode boundary conditions that have an arbitrary shape and orientation in space--time, in terms of an arbitrary coordinate system with metric tensor $g_{\mu\nu}$.

We now can re--write the above special cases of space--time boundary conditions---(\ref{eq:space_boundary}) and (\ref{Intr_Nat_Bdd_temporal})---in a general form.  To do this we introduce a set of functions to describe the boundary, $\Sigma^{\mu}(u^1,u^2,u^3)$, depending on three internal coordinates, $u^1$, $u^2$, and $u^3$.  The functions $\Sigma^{\mu}$ are defined such that, for the surface normal $n_\mu$ the equation of the space--time surface is given by $n_{\mu}\Sigma^{\mu}=0$.   The continuity of the in--plane components of the field tensors across any space--time boundary is thus expressed as,
\begin{align}
F^\I_{\mu\nu} \pfrac{\Sigma^\mu}{u^a} \pfrac{\Sigma^\nu}{u^b} 
= 
F^\II_{\mu\nu} \pfrac{\Sigma^\mu}{u^a} \pfrac{\Sigma^\nu}{u^b} 
\quadand
H^\I_{\mu\nu} \pfrac{\Sigma^\mu}{u^a} \pfrac{\Sigma^\nu}{u^b}
=
H^\II_{\mu\nu} \pfrac{\Sigma^\mu}{u^a} \pfrac{\Sigma^\nu}{u^b}
\label{Intr_Nat_Bd_Sigma}.
\end{align}
where again the superscripts `$\I$' and `$\II$' indicate the regions either side of the boundary.  Assuming Cartesian coordinates in Minkowski space--time, a spatial boundary oriented along the $x$--axis corresponds to $\Sigma^{\mu}=(u_1,0,u_2,u_3)$.  In this case Eq. (\ref{Intr_Nat_Bd_Sigma}) reduces to the continuity of $F_{02}$, $F_{03}$, and $F_{23}$ (i.e. $\boldsymbol{E}_{\parallel}$ and $\boldsymbol{B}_{\perp}$) as well as $H_{02}$, $H_{03}$, and $H_{23}$ (i.e. $\boldsymbol{H}_{\parallel}$ and $\boldsymbol{D}_{\perp}$).  These are precisely the boundary conditions prescribed in Eq. (\ref{eq:space_boundary}).  The same is also true for a temporal boundary, where $\Sigma^{\mu}=(0,u_1,u_2,u_3)$, the application of which reduces Eq. (\ref{Intr_Nat_Bd_Sigma}) to the continuity of $F_{ij}$ and $H_{ij}$, i.e. of the vectors $\boldsymbol{B}$ and $\boldsymbol{D}$, as specified by Eq. (\ref{Intr_Nat_Bdd_temporal}). 

The boundary conditions (\ref{Intr_Nat_Bd_Sigma}) can also be applied to moving boundaries, similar to the one dimensional example discussed in the introduction.  For example, again assuming Cartesian coordinates in Minkowski space--time, a surface extended along the $y$ and $z$ axes, and moving along the $x$--axis at velocity $V$ corresponds to $\Sigma^{\mu}=(u_1,V u_1,u_2,u_3)$, which after substitution into the general boundary conditions (\ref{Intr_Nat_Bd_Sigma}), gives
\begin{align}
    F^{\I}_{0,2}+\frac{V}{c}F^{\I}_{1,2}&=F^{\II}_{0,2}+\frac{V}{c}F^{\II}_{1,2},\nonumber\\
    F^{\I}_{0,3}+\frac{V}{c}F^{\I}_{1,3}&=F^{\II}_{0,3}+\frac{V}{c}F^{\II}_{1,3},\nonumber\\\text{and}\quad
    F^{\I}_{23}&=F^{\II}_{23},\label{eq:moving-boundary-1}
\end{align}
equivalent to the continuity of $\boldsymbol{E}_{\parallel}+\boldsymbol{V}\times\boldsymbol{B}_{\parallel}$, and $\boldsymbol{B}_{\perp}$ across the moving interface.  Similarly, for the excitation tensor
\begin{align}
    H^{\I}_{0,2}+\frac{V}{c}H^{\I}_{1,2}&=H^{\II}_{0,2}+\frac{V}{c}H^{\II}_{1,2},\nonumber\\
    H^{\I}_{0,3}+\frac{V}{c}H^{\I}_{1,3}&=H^{\II}_{0,3}+\frac{V}{c}H^{\II}_{1,3},\nonumber\\\text{and}\quad
    H^{\I}_{23}&=H^{\II}_{23},\label{eq:moving-boundary-2}
\end{align}
which is equivalent to the continuity of $\boldsymbol{H}_{\parallel}-\boldsymbol{V}\times\boldsymbol{D}_{\parallel}$ and $\boldsymbol{D}_{\perp}$.  These are the same boundary conditions given in e.g.~\cite{bahrami2024}.

More generally we can also apply this formalism to accelerating boundaries.  For a linear acceleration we can simply write e.g. $\Sigma^{\mu}=(u_1,\int^{u_1}V(t'){\rm d}t',u_2,u_3)$, which gives again (\ref{eq:moving-boundary-1}--\ref{eq:moving-boundary-2}) but with the velocity evaluated at a fixed time.  This is the expected result, applying the conventional boundary conditions (\ref{eq:space_boundary}) in the instantaneous rest frame of the boundary.  However unlike the motion of a real material body---where the velocity is restricted to be less than $c$ for the Lorentz transformations to be valid---here the velocity can exceed that of light (as can be obtained via space--time modulated, `synthetic motion'~\cite{harwood2024}), without any complication arising in the theory.

\subsection{Adapted coordinate systems}
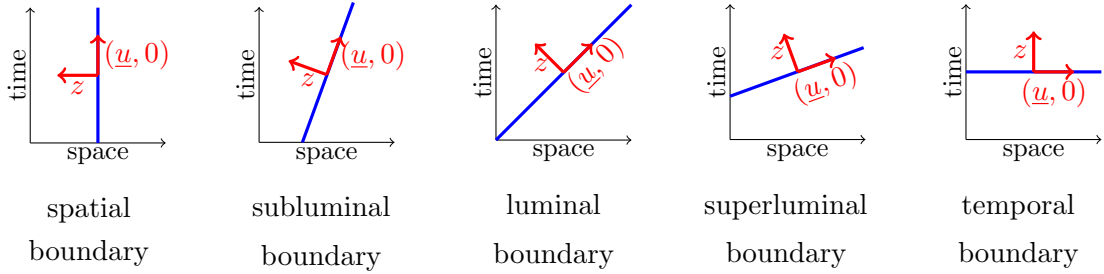
\begin{figure}
\parbox{0.18\textwidth}{\centering
\begin{tikzpicture}[scale=1.8]
\draw [->] (0,0) -- node[below=-2]{\footnotesize space} (1,0) ;
\draw [->] (0,0) -- node[rotate=90,above=-2]{\footnotesize time} (0,1) ;
\draw [very thick,blue] (.5,0) -- +(0,1) ;
\draw [very thick,red,->] (.5,.5) -- node[right=-2]{\small $(\uvec,0)$} +(0,0.3) ;
\draw [very thick,red,->] (.5,.5) -- node[below=-2]{\small $z$} +(-0.3,0) ;
\end{tikzpicture}
\\[-0.5em]
\small spatial \\[-0.5em] boundary}
\parbox{0.18\textwidth}{
\begin{tikzpicture}[scale=1.8]
\draw [->] (0,0) -- node[below=-2]{\footnotesize space} (1,0) ;
\draw [->] (0,0) -- node[rotate=90,above=-2]{\footnotesize time} (0,1) ;
\draw [very thick,blue] (.32,0) -- +(70:1.1) ;
\draw [very thick,red,->] (.5,.5) -- node[right=-2]{\small $(\uvec,0)$} +(70:0.3) ;
\draw [very thick,red,->] (.5,.5) -- node[below=-2]{\small $z$} +({70+90}:0.3) ;
\end{tikzpicture}
\\[-0.5em]
\centering
\small subluminal\\[-0.3em] boundary
}
\parbox{0.18\textwidth}{\centering
\begin{tikzpicture}[scale=1.8]
\draw [->] (0,0) -- node[below=-2]{\footnotesize space} (1,0) ;
\draw [->] (0,0) -- node[rotate=90,above=-2]{\footnotesize time} (0,1) ;
\draw [very thick,blue] (0,0) -- +(1,1) ;
\draw [very thick,red,->] (.5,.5) -- node[below=-4,rotate=45]{\small $(\uvec,0)$} +(45:0.3) ;
\draw [very thick,red,->] (.5,.5) -- node[below left=-4]{\small $z$} +({45+90}:0.3) ;
\end{tikzpicture}
\\[-0.5em]
\small luminal \\[-0.3em] 
boundary
}
\parbox{0.18\textwidth}{\centering
\begin{tikzpicture}[scale=1.8]
\draw [->] (0,0) -- node[below=-2]{\footnotesize space} (1,0) ;
\draw [->] (0,0) -- node[rotate=90,above=-2]{\footnotesize time} (0,1) ;
\draw [very thick,blue] (0,0.32) -- +(20:1.05) ;
\draw [very thick,red,->] (.5,.5) -- node[below=-2,rotate=20]{\small $(\uvec,0)$} +(20:0.3) ;
\draw [very thick,red,->] (.5,.5) -- node[left=-2]{\small $z$} +({20+90}:0.3) ;
\end{tikzpicture}
\\[-0.5em]
\small superluminal \\[-0.3em] 
boundary
}
\parbox{0.18\textwidth}{\centering
\begin{tikzpicture}[scale=1.8]
\draw [->] (0,0) -- node[below=-2]{\footnotesize space} (1,0) ;
\draw [->] (0,0) -- node[rotate=90,above=-2]{\footnotesize time} (0,1) ;
\draw [very thick,blue] (0,0.5) -- +(0:1.0) ;
\draw [very thick,red,->] (.5,.5) -- node[below=-2]{\small $(\uvec,0)$} +(0:0.3) ;
\draw [very thick,red,->] (.5,.5) -- node[left=-2]{\small $z$} +({0+90}:0.3) ;
\end{tikzpicture}
\\[-0.5em]
\small temporal \\[-0.3em] 
boundary
}
\caption{\textbf{Illustrating the benefits of using relativity:} a single set of BCs, for example \eqref{Intr_Nat_Bd_Sigma}, can be applied to arbitrarily moving boundaries (the blue lines indicated above). 
We use a coordinate system, $(u^1,u^2,u^3,z)$ adapted to the boundary, so that $z=0$ corresponds to points on the boundary. We set $\uvec=(u^1,u^2,u^3)$ so that the points $(\uvec,0)$ lie within the boundary, while $z=u^4$ is the transverse coordinate.}
\label{fig:BC-GR}
\end{figure}
One of the key advantages of using tensor calculus and the language of relativity is we can choose a coordinate system adapted to the boundary.  We use coordinates $(u^1,u^2,u^3,u^4)$ where the boundary is given by $u^4=0$. Thus $u^4$ is the transverse coordinate. It is useful to also use the coordinate system $(u^1,u^2,u^3,z)$, where $z=u^4$.  We emphasize that the symbol $z$ does \emph{not} represent the usual Cartesian coordinate, except in the special case of a spatial boundary where the surface normal points along the $z$--axis. This coordinate system is valid whether the boundary is subluminal, luminal or superlumial, as depicted in figure \ref{fig:BC-GR}. 
The boundary functions $\Sigma^\mu(u^1,u^2,u^3)$ are greatly simplified to become
\begin{align}
    \Sigma^1 (u^1,u^2,u^3) = u^1,\,
    \Sigma^2 (u^1,u^2,u^3) = u^2,\,
    \Sigma^3 (u^1,u^2,u^3) = u^3 \text{ and }
    \Sigma^4 (u^1,u^2,u^3) = 0.
\end{align}
In this coordinate system, many BCs are also simplified so that \eqref{Intr_Nat_Bd_Sigma} become
\begin{align}
F^\I_{a b}  
= 
F^\II_{ a b } 
\quadand
H^\I_{a b} 
=
H^\II_{a b}
\label{Intr_Nat_Bd_Adap}.
\end{align}
Recall that the Latin indices are over $a,b=1,2,3$.

\subsection{General differential BCs}

We can---at least in the cases where non--locality is weak enough to be captured in terms of local derivatives of the field---also include dispersion in a four--dimensional boundary condition. 
In this case any additional fields at the boundary only depend only on the values of $F_{\mu\nu}$ and $H_{\mu\nu}$, and their derivatives at that point. In this case we can generalize the examples \eqref{Intr_Nat_Bd_Temp_disp} and \eqref{Intr_Nat_Bdd_spatial} to cases where the media on one or both sides of a space--time boundary are dispersive.  We use the adapted coordinate system $(u^1,u^2,u^3,u^4)$ described above. We write $\uvecF=(u^1,u^2,u^3,u^4)$ as the three surface coordinates $\uvecT=(u^1,u^2,u^3)$, plus $u_4$ which is the coordinate normal to the surface. Hence the point $(\uvecT,0)$ corresponds to a point on the boundary, $\Sigma$. 

The generalized boundary conditions are then given by
%[
\begin{align}
\begin{split}
&
\sum_{\svecF}
\left(
\kappabar^{\F\I\mu\nu}_{\xia \svecF}(\uvecT)
\,
\frac{\partial^\svecF}{\partial\uvecF^\svecF}
F^{\I}_{\mu\nu}(\uvecT,0)
+
\kappabar^{\H\I\mu\nu}_{\xia \svecF}(\uvecT)
\,
\frac{\partial^\svecF}{\partial\uvecF^\svecF}
H^{\I}_{\mu\nu}(\uvecT,0)
\right)
\\&
\qquad =
\sum_{\svecF}
\left(
\kappabar^{\F\II\mu\nu}_{\xia \svecF}(\uvecT)
\,
\frac{\partial^\svecF}{\partial\uvecF^\svecF}
 F^{\II}_{\mu\nu}(\uvecT,0)
+
\kappabar^{\H\II\mu\nu}_{\xia \svecF}(\uvecT)
\,
\frac{\partial^\svecF}{\partial\uvecF^\svecF}
 H^{\II}_{\mu\nu}(\uvecT,0)
\right)
\end{split}
\label{BddPt_Psi_action}
\end{align}
%]
where $A=1,\ldots,\dimXi$ labels each of the BCs, the number of which is determined by the equations of motion of the medium in regions $\I$ and $\II$.
In Eq. (\ref{BddPt_Psi_action}), the vector $\svecF=(s_1,s_2,s_3,s_4)$ contains the orders of the derivatives of each of the four coordinates to be evaluated on the boundary.

Note that the sums $\sum_{\svecF}$ range over a finite number of combinations of space--time derivatives. Written out explicitly, this notation implies a product of derivatives of order $s_i$ with respect to each of the coordinates,
\begin{align}
\frac{\partial^\svecF}{\partial\uvecF^\svecF}
F^{\I}_{\mu\nu}(\uvecT,0)
=
\Big(\pfrac{}{u^1}\Big)^{s_1}
\Big(\pfrac{}{u^2}\Big)^{s_2}
\Big(\pfrac{}{u^3}\Big)^{s_3}
\Big(\pfrac{}{u^4}\Big)^{s_4}
F^{\I}_{\mu\nu}(\uvecT,0)\label{eq:d_us_definition}
\end{align}
Note that the derivative $\partial/\partial u^4$ differentiates normal to the boundary. Since the fields $\Set{F^\I_{\mu\nu},H^\I_{\mu\nu},F^\II_{\mu\nu},H^\II_{\mu\nu}}$ are only defined up to the boundary these must be one-sided derivatives. The other three derivatives, differentiate inside the boundary and are two-sided derivatives.
%
%: Section - Generalized linear boundary conditions
%
\section{Generalized Linear Boundary Conditions}
\label{Sec:GenBdd Formula}

The discussion so far has highlighted a large range of possible (linear) boundary conditions.  For standard spatial boundaries, these can be encompassed in the usual field continuity conditions (\ref{Intr_Nat_Bdd_spatial}) (with appropriate modifications to allow for surface charge and current), plus the additional conditions due to spatial and temporal dispersion described in Sec. \ref{sec:dispersion}.  However, it is clear from the preceding section \ref{sec:4d-bc} that there is a much wider class of boundary conditions at a general space--time interface.  In this section we present boundary conditions that encompass all the above examples. We will then proceed in Sec. \ref{Sec:GenBddAxioms}  to show that these are the most general boundary conditions that satisfy a set of physically reasonable conditions.

We choose a coordinate system $(u^1,u^2,u^3,z)$ which is adapted to the boundary.  In the general case the boundary conditions can be written in terms of four integral kernels, $\Set{\kappa^{\F^\I}_{Ar}{}^{\mu\nu}(\uvec,\uvec'),\kappa^{\H\I}_{Ar}{}^{\mu\nu}(\uvec,\uvec'),\kappa^{\F\II}_{Ar}{}^{\mu\nu}(\uvec,\uvec'),\kappa^{\H\II}_{Ar}{}^{\mu\nu}(\uvec,\uvec')}$.  These kernels must be distributions in $\uvec'$ to allow for the $\delta$--functions required for local boundary conditions. The kernels are also assumed antisymmetric $\kappa^{\F\I}_{Ar}{}^{\mu\nu}(\uvec,\uvec')=-\kappa^{\F\I}_{Ar}{}^{\nu\mu}(\uvec,\uvec')$. 
In terms of these integral kernels the general boundary conditions are then given by

%[
\begin{align}
\begin{split}
&\tfrac12\sum_{r=0}^k\int_{J^\Sigma(\uvec)} 
\Big(
\kappa^{\F\I }_{Ar}{}^{\mu\nu}(\uvec,\uvec')
\,
\frac{\partial^r }{\partial z^r}F^\I_{\mu\nu}(\uvec',0)
+
\kappa^{\H\I }_{Ar}{}^{\mu\nu}(\uvec,\uvec')
\,
\frac{\partial^r }{\partial z^r}H^\I_{\mu\nu}(\uvec',0)
\Big)
\,\omega_{(\uvec,z)}\, d^3\uvec'
\\&
\qquad
=
\tfrac12\sum_{r=0}^k\int_{J^\Sigma(\uvec)} 
\Big(
\kappa^{\F\II }_{Ar}{}^{\mu\nu}(\uvec,\uvec')
\,
\frac{\partial^r }{\partial z^r}F^\II_{\mu\nu}(\uvec',0)
+
\kappa^{\H\II }_{Ar}{}^{\mu\nu}(\uvec,\uvec')
\,
\frac{\partial^r }{\partial z^r}H^\II_{\mu\nu}(\uvec',0)
\Big)
\,\omega_{(\uvec,z)}\,d^3\uvec'.
\end{split}
\label{Bdd_Gen_Bdd}
\end{align}
%]
where $A=1,\ldots,\dimXi$ again label the different boundary conditions, and $\omega_{(\uvec,z)}$ is the measure \eqref{Intr_measure} expressed in the $(\uvec,z)$ coordinate system.

Having stated \eqref{Bdd_Gen_Bdd} as the most general space--time boundary condition, we show here that these include all of the examples we have presented up to now.  In Minkowski space, the natural boundary conditions \eqref{Intr_Nat_Bd_Sigma}, which are the space--time generalisation of the standard conditions, \eqref{Intr_Nat_Bdd_spatial} and \eqref{Intr_Nat_Bdd_temporal}, arise by setting $m=6$ and $k=0$.  For $A=1,2,3$ we can give the condition for the continuity of the tangential fields as
\begin{align}
\kappa^{\F\I }_{A0}{}^{\mu\nu}(\uvec,\uvec') &= 
\kappa^{\F\II }_{A0}{}^{\mu\nu}(\uvec,\uvec') = 
\tfrac12\frac{\epsilon^{abc}}{\omega_{(\uvec,z)}}\delta^{\mu}_{b}\delta^{\nu}_{c}\delta(\uvec-\uvec'),
\nonumber\\[10pt]
\kappa^{\H\I }_{A0}{}^{\mu\nu}(\uvec,\uvec') &=
\kappa^{\H\II }_{A0}{}^{\mu\nu}(\uvec,\uvec') = 0
\label{GenBC_Std_BC}
\end{align}
where $a=A$, $\epsilon^{abc}$ is the antisymmetric unit tensor.  Likewise for $A=4,5,6$ we have the similar relation
\begin{align}
\kappa^{\F\I }_{A0}{}^{\mu\nu}(\uvec,\uvec') &=
\kappa^{\F\II }_{A0}{}^{\mu\nu}(\uvec,\uvec') = 0\nonumber\\[10pt]
\kappa^{\H\I }_{A0}{}^{\mu\nu}(\uvec,\uvec') &= 
\kappa^{\H\II }_{A0}{}^{\mu\nu}(\uvec,\uvec') = 
\frac{\epsilon^{abc}}{2\omega_{(\uvec,z)}}\delta^{\mu}_{b}\delta^{\nu}_c\delta(\uvec-\uvec')
\label{GenBC_Std_BC_H}
\end{align}
where $a=A-3$.  Note that -- as described above -- here we are using a coordinate system where three of the four coordinates follow the boundary, plus one normal coordinate.  The four dimensional indices $\mu$ and $\nu$ in Eq. (\ref{GenBC_Std_BC_H}) are in this coordinate system, which simplifies Eqns. (\ref{GenBC_Std_BC}--\ref{GenBC_Std_BC_H}) compared to the earlier expressions (\ref{Intr_Nat_Bd_Sigma}).

As discussed above, when dealing with dispersive media we may need to add additional boundary conditions to those given in (\ref{GenBC_Std_BC} and (\ref{GenBC_Std_BC_H}).  These can also be encoded in the general expression (\ref{Bdd_Gen_Bdd}).  For example, using the definition of the displacement field (\ref{Intr_DH_H}) in Minkowski space covered with Cartesian coordinates, the polarization field equals e.g. $P^\I_x=c^{-1}H^\I_{23}-\epsilon_0 c F^\I_{01}$. Thus the continuity of the $x$--component of the polarization $P_x^\I = P_x^\II$ can be included in Eq. (\ref{Bdd_Gen_Bdd}) (here $A=7$)
\begin{align}
\begin{aligned}
&\kappa^{\F\I }_{70}{}^{\mu\nu}(\uvec,\uvec') = 
\kappa^{\F\II }_{70}{}^{\mu\nu}(\uvec,\uvec') = 
-\tfrac{\epsilon_0 c}{2}(\delta^\mu_0\delta^\nu_1-\delta^\nu_0\delta^\mu_1) \ \delta(\uvec-\uvec')
\quadand
\\
&\kappa^{\H\I }_{70}{}^{\mu\nu}(\uvec,\uvec') = 
\kappa^{\H\II }_{70}{}^{\mu\nu}(\uvec,\uvec') = 
\tfrac{1}{2c}(\delta^\mu_2\delta^\nu_3-\delta^\nu_2\delta^\mu_3) \ \delta(\uvec-\uvec').
\end{aligned}
\end{align}
If, in addition the time derivative of the polarization is continuous, $\partial_t P_x^\I = \partial_tP_x^\II$ this boundary equation involves a derivative of the field tensors and we have an increased order of the boundary condition, $k=1$. This equation is given by (when $A=8$)
\begin{align}
\begin{aligned}
&\kappa^{\F\I }_{81}{}^{\mu\nu}(\uvec,\uvec') = 
\kappa^{\F\II }_{81}{}^{\mu\nu}(\uvec,\uvec') = 
-\tfrac{\epsilon c}{2}(\delta^\mu_0\delta^\nu_1-\delta^\nu_0\delta^\mu_1) \ \delta(\uvec-\uvec')
\quadand
\\
&\kappa^{\H\I }_{81}{}^{\mu\nu}(\uvec,\uvec') = 
\kappa^{\H\II }_{81}{}^{\mu\nu}(\uvec,\uvec') = 
\tfrac{1}{2c}(\delta^\mu_2\delta^\nu_3-\delta^\nu_2\delta^\mu_3) \ \delta(\uvec-\uvec')
\end{aligned}
\end{align}
There are also cases where we must implement sheet transition conditions, where the boundary conditions contain mixtures of the electromagnetic field evaluated either side of the boundary, and the internal polarization and/or magnetization fields.  We gave such an example above, with the boundary conditions (\ref{Disp_BC}) for a metasurface containing a surface current density, $\partial_t\mathcal{P}_x$, where $P_x=\mathcal{P}_x\delta(z)$.  As discussed above in this case there is no need for additional boundary conditions on the polarization field, the dynamics of the polarization entering through the integral relation (\ref{Disp_Green_Int}).  To describe this example using Eq. (\ref{Bdd_Gen_Bdd}), we choose the coordinate system $(u^1,u^2,u^3,u^4)=(x,y,t,z)$, so that the boundary is at $u^4=z=0$.  The boundary conditions  (\ref{Disp_BC}) are mostly those given in  (\ref{GenBC_Std_BC}) and (\ref{GenBC_Std_BC_H}), except for $H_{y}^{\II}=H_{y}^{\I}-\partial_t\mathcal{P}_x$, which is equivalent to the non--local condition
\begin{equation}
    H_{y}^{\II}=H_{y}^{\I}-\int dt'\int dx'\int dy'\partial_{t}\chi(t-t',x-x',y-y')E_x^{\II}(t',x',y')
\end{equation}
which, in terms of the general condition (\ref{Bdd_Gen_Bdd}) requires the following modification to the standard boundary conditions (for the condition numbered $A=4$),
\begin{align}
\kappa^{\H\I 31}_{40}(\uvec,\uvec') &= \delta(\uvec-\uvec'),\nonumber\\
\kappa^{\H\II 31}_{40}(\uvec,\uvec') &= \delta(\uvec-\uvec'),\nonumber\quad
\text{and}\quad\\
\kappa^{\F\II 31}_{40}(\uvec,\uvec') &= c\frac{\partial}{\partial u^3}\chi(u^3-u'{}^3,u^1-u'{}^1,u^2-u'{}^{2}),
\end{align}
Dispersive boundary effects can---as also discussed above---be approximated through a series expansion of the non--local relations (\ref{Bdd_Gen_Bdd}), as given in Eq. (\ref{BddPt_Psi_action}).  To reproduce this family of boundary conditions we simply set the integral kernels to equal a set of derivatives of delta functions 
\begin{align}
\kappa^{\F\I\mu\nu}_{A {s_4}}(\uvec,\uvec')
=
\sum_{s_1,s_2,s_3}
\frac{(-1)^{s_1+s_2+s_3}}{\omega_{(\uvec,z)}}
\kappabar^{\F\I\mu\nu}_{A \svecF}\, \frac{\partial^\svecF}{\partial\uvecF^\svecF}
\delta(\uvec-\uvec')
\end{align}
and likewise for $\kappa^{\H\I\mu\nu}_{A r}(\uvec,\uvec'), \kappa^{\F\II\mu\nu}_{A r}(\uvec,\uvec'), \kappa^{\H\II\mu\nu}_{A r}(\uvec,\uvec')$ and  $r=s_1+\cdots+s_4$.  The notation is the same as that introduced in Eq. (\ref{eq:d_us_definition}).

%%%%%%%%%%%%%%%%%%%%%%%%%%%%%%

\subsection{Extracting the order, $k$ of the boundary conditions, and the integral kernels $\kappa$}
\label{SubSec:Extrating k and kappa}

Taking an arbitrary electromagnetic field, the boundary conditions \eqref{Bdd_Gen_Bdd}, will not be satisfied.  To quantify this we define a set of distributions $R_1,\ldots,R_\dimXi$, which represent the remainder of the difference between the two sides of Eq. (\ref{Bdd_Gen_Bdd})
%[
\begin{align}
\begin{split}
\lefteqn{R_A(\uvec)[F^\I_{\mu\nu},H^\I_{\mu\nu},F^\II_{\mu\nu},H^\II_{\mu\nu}]} 
\hspace{3em}&
\\&=
\tfrac12\sum_{r=0}^k\int_{J^\Sigma(\uvec)} 
\Big(
\kappa^{\F\I }_{Ar}{}^{\mu\nu}(\uvec,\uvec')
\,
\frac{\partial^r }{\partial z^r}F^\I_{\mu\nu}
+
\kappa^{\H\I }_{Ar}{}^{\mu\nu}(\uvec,\uvec')
\,
\frac{\partial^r }{\partial z^r}H^\I_{\mu\nu}
\Big)
\,d^3\uvec'
\\&\quad
-
\tfrac12\sum_{r=0}^k\int_{J^\Sigma(\uvec)} 
\Big(
\kappa^{\F\II }_{Ar}{}^{\mu\nu}(\uvec,\uvec')
\,
\frac{\partial^r }{\partial z^r}F^\II_{\mu\nu}
+
\kappa^{\H\II }_{Ar}{}^{\mu\nu}(\uvec,\uvec')
\,
\frac{\partial^r }{\partial z^r}H^\II_{\mu\nu}
\Big)
\,d^3\uvec'
\end{split}
\label{Bdd_Gen_Def_R}
\end{align}
%]
where for brevity we have suppressed the arguments of the field tensors.  Clearly the BCs are satisfied if and only if all the $R_A(\uvec)=0$ for all $\uvec$.  

In order to calculate the order $k$ of the distributions, we replace the electromagnetic fields with $\{z^{k+1}F^\I_{\mu\nu},z^{k+1}H^\I_{\mu\nu},z^{k+1}F^\II_{\mu\nu},z^{k+1}H^\II_{\mu\nu}\}$ then
%[
\begin{align}
\begin{split}
\lefteqn{R_A(\uvec)[z^{k+1}F^\I_{\mu\nu},z^{k+1}H^\I_{\mu\nu},z^{k+1}F^\II_{\mu\nu},z^{k+1}H^\II_{\mu\nu}]} 
\hspace{3em} &
\\
&=
\tfrac12\sum_{r=0}^k\int_{J^\Sigma(\uvec)} 
\Big(
\kappa^{\F\I }_{Ar}{}^{\mu\nu}(\uvec,\uvec')
\,
\frac{\partial^r }{\partial z^r}\big(z^{k+1}F^\I_{\mu\nu}\big)
+
\kappa^{\H\I }_{Ar}{}^{\mu\nu}(\uvec,\uvec')
\,
\frac{\partial^r }{\partial z^r}\big(z^{k+1}H^\I_{\mu\nu}\big)
\Big)
\,d^3\uvec'
\\&\quad
-
\tfrac12\sum_{r=0}^k\int_{J^\Sigma(\uvec)} 
\Big(
\kappa^{\F\II }_{Ar}{}^{\mu\nu}(\uvec,\uvec')
\,
\frac{\partial^r }{\partial z^r}\big(z^{k+1}F^\II_{\mu\nu}\big)
+
\kappa^{\H\II }_{Ar}{}^{\mu\nu}(\uvec,\uvec')
\,
\frac{\partial^r }{\partial z^r}\big(z^{k+1}H^\II_{\mu\nu}\big)
\Big)
\,d^3\uvec'\\
&=0
\end{split}
\label{Bdd_Gen_R_order}
\end{align}
%]
since $z=0$ on $\Sigma$.  Hence we can define the order of the boundary conditions  as the smallest $k$ such that $R_A(\uvec)=0$ for all $A$ and $\uvec$ when applied to $\left\{z^{k+1}F^\I_{\mu\nu},z^{k+1}H^\I_{\mu\nu},z^{k+1}F^\II_{\mu\nu},z^{k+1}H^\II_{\mu\nu}\right\}$.  

We now illustrate how the $\kappa$ tensors appearing in the boundary conditions (\ref{Bdd_Gen_Bdd}) can be extracted from the remainder functions $R_{A}$. This method only works if the $\kappa$ are regular functions and not distributions. 
Let $\psi_0(\sigma)$ be a bump function, so that $\psi_0(0)=1$, $\partial_\sigma^r\psi_0(0)=0$ for all $r\ge1$ and $\int_{-\infty}^\infty\psi_0(\sigma) d\sigma=1$. 
This has the property that $\partial_z^s \big(z^r \psi_0(z)\big)=r!\delta^s_r$ and $\lim_{\epsilon\to 0} \epsilon^{-1} \psi_0(\epsilon^{-1}\sigma)=\delta(\sigma)$.

Now consider the field 
\begin{align}
   T_{\mu\nu}^{\sigma\rho}(r,\uvec,z,\vvec,\epsilon)
   &=
   \frac{1}{2\epsilon^3 r!} (\delta^\sigma_\mu \delta^\rho_\nu -\delta^\sigma_\nu \delta^\rho_\mu)
   \psi_0\Big(\frac{u^1-v^1}{\epsilon}\Big)
   \psi_0\Big(\frac{u^2-v^2}{\epsilon}\Big)
   \psi_0\Big(\frac{u^3-v^3}{\epsilon}\Big)
   z^r\psi_0(z)
\label{Bdd_Gen_Test}
\end{align}
for $\epsilon>0$. Then
\begin{align}
    \kappa^{\F\I }_{Ar}{}^{\sigma\rho}(\uvec,\vvec)
    = \lim_{\epsilon\to0} \Big(
    R_A(\uvec)[T^{\sigma\rho}_{\mu\nu}(r,\uvec,z,\vvec,\epsilon),0,0,0]
    \Big)\,,
    \label{Bdd_Gen_kappa_extact}
\end{align}
and likewise for the other $\kappa$'s.
This follows since
\begin{align}
\begin{split}
&
\lim_{\epsilon\to0}
    R_A(\uvec)[T^{\sigma\rho}_{\mu\nu}(r,\uvec,z,\vvec,\epsilon),0,0,0]
    \\&=
\tfrac12 \lim_{\epsilon\to0}\sum_{s=0}^k\int_{J^\Sigma(\uvec)} 
\kappa^{\F\I }_{As}{}^{\mu\nu}(\uvec,\uvec')
\,
\frac{\partial^s }{\partial z^s}\Big(
T_{\mu\nu}^{\sigma\rho}(r,\uvec',z,\vvec,\epsilon)
\Big)
\,d^3\uvec'
    \\&=
 \lim_{\epsilon\to0}
\frac1{4 r!\epsilon^3}\sum_{s=0}^k\int_{J^\Sigma(\uvec)} 
\kappa^{\F\I }_{As}{}^{\mu\nu}(\uvec,\uvec')
\,
\frac{\partial^s }{\partial z^s}\Big(
(\delta^\sigma_\mu \delta^\rho_\nu -\delta^\sigma_\nu \delta^\rho_\mu)
   \psi_0\Big(\frac{u^1{-}v^1}{\epsilon}\Big)
   \psi_0\Big(\frac{u^2{-}v^2}{\epsilon}\Big)
\\&\hspace{10em}
   \psi_0\Big(\frac{u^3{-}v^3}{\epsilon}\Big)
   z^r\psi_0(z)
\Big)
\,d^3\uvec'
   \\&=
 \lim_{\epsilon\to0}
\frac1{2 r!\epsilon^3}\sum_{s=0}^k\int_{J^\Sigma(\uvec)} 
\kappa^{\F\I }_{As}{}^{\sigma\rho}(\uvec,\uvec')
\,
\frac{\partial^s }{\partial z^s}\Big(
   \psi_0\Big(\frac{u^1{-}v^1}{\epsilon}\Big)
   \psi_0\Big(\frac{u^2{-}v^2}{\epsilon}\Big)
   \psi_0\Big(\frac{u^3{-}v^3}{\epsilon}\Big)
   z^r\psi_0(z)
\Big)
\,d^3\uvec'
   \\&=
 \lim_{\epsilon\to0}
\frac1{2 \epsilon^3}\int_{J^\Sigma(\uvec)} 
\kappa^{\F\I }_{Ar}{}^{\sigma\rho}(\uvec,\uvec')
\,
   \psi_0\Big(\frac{u^1{-}v^1}{\epsilon}\Big)
   \psi_0\Big(\frac{u^2{-}v^2}{\epsilon}\Big)
   \psi_0\Big(\frac{u^3{-}v^3}{\epsilon}\Big)
\,d^3\uvec'
\\&=
\kappa^{\F\I }_{Ar}{}^{\sigma\rho}(\uvec,\vvec)
\end{split}
\end{align}
As stated this assumes the $\kappa$'s are regular functions. For distributional $\kappa$'s the technique will depend on the nature and support of the distibutions.

%%%%%%%%%%%%%%%%%%%%%%%%%%%%%%
%%%%%%%%%%%%%%%%%%%%%%%%%%%%%%

\subsection{Why are these  the most general BCs?}
\label{Sec:GenBddAxioms}

In section \ref{Sec:GenBdd Formula} we gave a set of BCs. We claimed were the most general BCs which satisfy a set of conditions. In this section we give these conditions and show that \eqref{Bdd_Gen_Bdd} are indeed the most general BCs that satisfy these conditions. The necessary conditions for our boundary conditions to be general are;
\begin{itemize}
\item Linearity.
\item Causality.
\item Dependence only on the fields on the boundary and their derivatives.
\item The boundary conditions are given by a finite set of distributions, $\Set{R_1(\uvec),\ldots R_\dimXi(\uvec)}$ which have a maximum order.
\end{itemize}

The statement that the BCs are linear can be expressed as follows. Given two sets of fields $\Set{F^\I_{\mu\nu},H^\I_{\mu\nu},F^\II_{\mu\nu},H^\II_{\mu\nu}}$ and $\Set{\hat{F}^\I_{\mu\nu},\hat{H}^\I_{\mu\nu},\hat{F}^\II_{\mu\nu},\hat{H}^\II_{\mu\nu}}$, which both satisfy the BC. 
Then we can add them together to create a new set of field which also satisfy the BCs, that is,  $\Set{F^\I_{\mu\nu}+\hat{F}^\I_{\mu\nu},H^\I_{\mu\nu}+\hat{H}^\I_{\mu\nu},F^\II_{\mu\nu}+\hat{F}^\II_{\mu\nu},H^\II_{\mu\nu}+\hat{H}^\II_{\mu\nu}}$ which also satisfy the BC. We can also scale the fields by an arbitrary constant $\lambda$, and they satisfy the BCs, so that  $\Set{\lambda F^\I_{\mu\nu}, \lambda H^\I_{\mu\nu}, \lambda F^\II_{\mu\nu},\lambda H^\II_{\mu\nu}}$ satisfy the BC. We also assume the BC are satisfies if all fields are zero. This last statement is a trivial implication of linearity if we assume that there are at least one solution to the BC.

As stated above, we demand that the BCs themselves are causal. That means that information about the field on one side of the boundary does not travel faster than the speed of light to affect the fields on the other side of the boundary. 

We also demand that the BCs be local to the
boundary. Otherwise they are not really BCs at all. For example one would not expect the value of fields a significant distance from the boundary to be affected directly by the boundary. Thus the BCs should only depend on the value of electromagnetic fields on the boundary and a finite number of derivatives. There is a relationship between the idea of locality and the idea of dispersion. The nature of this dependency may be divided into three categories. Consider a cause $\alpha$ and response $\beta$, both of which are fields on $\MMan$. 
%\emph{(Do we really need to introduce $\alpha$ and $\beta$, which here are just surrogates for $F$ and $H$? MWM)}. 
Let  $\alpha|_\ppt$ represent the value of the field $\alpha$ at the point $\ppt\in\MMan$. Then we have:

\begin{itemize}
\item
\textbf{Pointwise dependence}: In this case $\alpha|_\ppt$ depends only on the
value of $\beta|_\ppt$. For example a non dispersive electromagnetic 
constitutive relation $H_{\mu\nu}=\tfrac{1}{2}\kappa_{\mu\nu}{}^{\rho\sigma}\,F_{\rho\sigma}$, where $\kappa_{\mu\nu}{}^{\rho\sigma}$ is a tensor, is pointwise local. 
%{\emph{(This touches the long standing controversy over whether a an instantaneously responding medium is causal. Do we want to mention this? MWM)}}
\item
\textbf{Differential dependence}: In this case $\beta|_\ppt$ depends not
only on the value of $\alpha|_\ppt$ but also on a finite number of
derivatives. Many authors model spatial dispersion in this manner \cite{khrabustovskyi2017interface}
%[Cite: Cirici group/Carsten]. 
For example one may consider $H_{\mu\nu}$
to depend on a number of derivatives of $F_{\mu\nu}$. Thus to evaluate
$H_{\mu\nu}|_\ppt$, one needs to know the value of $F_{\mu\nu}$, not only at $\ppt$ but
also in an arbitrary small neighbourhood of $\ppt$. 
\item
\textbf{Extended dependence}: In this case $\beta|_\ppt$ depends on values of $\alpha$ over an extended spacetime region. An example of this is the Lorentz model of temporal dispersion. In this case $\VD$ at time $t$ requires integrating $\VE$ in time.
\end{itemize}

For this reason we consider BCs which are pointwise
dependent or differential dependent with respect to the transverse
direction, but may be any of the above, including extended dependence
within the boundary. 

There is some disagreement in the literature as to whether weak spatial dispersion corresponds to pointwise dependent or
differentiably dependent. If one has a medium which is magnetoelectric, including the case of circularly polarised, then the question of whether this is pointwise local or differentiably local depends on how one expresses the constitutive parameters. If one uses $F_{\mu\nu}$ and $H_{\mu\nu}$ then once can relate them using a tensor which included spatial dispersion, thus the constitutive relation is pointwise local. Alternatively if one only uses $\VE$ and $\VH$ then one must use Maxwell's equations to construct the corresponding $\VB$ and $\VD$, which are differentiated once. In this case one has a differentiably dependent relationship. In this article we use $F_{\mu\nu}$ and $H_{\mu\nu}$ and would describe the relationship as pointwise.

We say there is a maximal order if there is some positive integer $k$ such that for all $\Set{z^{k+1}F^\I_{\mu\nu},z^{k+1}H^\I_{\mu\nu},z^{k+1}F^\II_{\mu\nu},z^{k+1}H^\II_{\mu\nu}}$ then the BCs are satisfied.

We are now in a position to show that these are the most general BC
\begin{theorem}
\label{thm_BC}
Given a set of BCs which are linear, causal, local to $\Sigma$ with differential dependence and a maximal order, we assume there exist a set of distributions $\Set{D_1(\uvec),\ldots D_\dimXi(\uvec)}$ for each point $\uvec\in\Sigma$ which have as arguments $[F^\I_{\mu\nu},H^\I_{\mu\nu},F^\II_{\mu\nu},H^\II_{\mu\nu}]$ such that the BCs are given by
\begin{align}
    D_A(\uvec)[F^\I_{\mu\nu},H^\I_{\mu\nu},F^\II_{\mu\nu},H^\II_{\mu\nu}]
    =0
    \label{Bdd_DA},
\end{align}
for all $A=1,\ldots\dimXi$ and $\uvec\in\Sigma$.
Then these BCs are given by equation \eqref{Bdd_Gen_Bdd}.
\end{theorem}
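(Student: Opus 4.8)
The plan is to represent each linear boundary functional by a distributional integral kernel via the Schwartz kernel theorem, and then to use causality together with the maximal--order hypothesis to cut that representation down to precisely the form \eqref{Bdd_Gen_Bdd}. First I would fix $A$ and a point $\uvec\in\Sigma$ and exploit linearity: since the field configurations in region $\I$, in region $\II$, and on $\Sigma$ can be prescribed independently as test data, linearity splits the functional as $D_A(\uvec)=D_A^{\F\I}(\uvec)+D_A^{\H\I}(\uvec)+D_A^{\F\II}(\uvec)+D_A^{\H\II}(\uvec)$, each summand depending on a single tensor field, so it suffices to treat $D_A^{\F\I}(\uvec)$. Differential dependence in the transverse direction means this functional factors through the transverse jet of $F^\I_{\mu\nu}$ along $\Sigma$, i.e.\ through the collection $\uvec'\mapsto\big(\partial^r_z F^\I_{\mu\nu}(\uvec',0)\big)_{r\ge0}$. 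The maximal--order hypothesis then truncates this jet at order $k$: replacing $F^\I_{\mu\nu}$ by $z^{k+1}F^\I_{\mu\nu}$ annihilates $\partial^r_z F^\I_{\mu\nu}$ on $\Sigma$ for every $r\le k$ but, by assumption, also annihilates the boundary condition, so no transverse derivative of order higher than $k$ can enter.

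Next, for each $r\in\Set{0,\ldots,k}$ the assignment $g_{\mu\nu}:=\partial^r_z F^\I_{\mu\nu}(\cdot,0)\mapsto D_A^{\F\I}(\uvec)$ is a continuous linear functional on test sections over $\Sigma$; by the Schwartz kernel theorem it is integration against a distribution $\kappa^{\F\I}_{Ar}{}^{\mu\nu}(\uvec,\cdot)$ on $\Sigma$, and a second application of the kernel theorem makes the joint dependence on $(\uvec,\uvec')$ a well--defined distribution on $\Sigma\times\Sigma$. Only the part of $\kappa^{\F\I}_{Ar}{}^{\mu\nu}$ antisymmetric in $\mu\nu$ contributes when contracted with $F^\I_{\mu\nu}$, so it may be taken antisymmetric with the conventional factor $\tfrac12$, while the invariant factor $\omega_{(\uvec,z)}$ is just the measure \eqref{Intr_measure} on $\Sigma$ in the adapted coordinates. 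Reassembling the four pieces and summing over $r$ reproduces the integrand of \eqref{Bdd_Gen_Bdd}, but \emph{a priori} with the integral taken over all of $\Sigma$.

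Finally I would impose causality to localise the kernels. Causality of the boundary condition means that $D_A(\uvec)$ is unchanged if the test fields are altered away from the set of points that can influence $\uvec$ through luminal paths lying in $\Sigma$; equivalently, if a test field vanishes on a neighbourhood of $J^\Sigma(\uvec)$ in $\Sigma$ then $D_A(\uvec)$ annihilates it. This is precisely the support condition $\mathrm{supp}\,\kappa^{\F\I}_{Ar}{}^{\mu\nu}(\uvec,\cdot)\subseteq J^\Sigma(\uvec)$, and similarly for the other three kernels, which lets the domain of integration in \eqref{Bdd_Gen_Bdd} be replaced by $J^\Sigma(\uvec)$. Collecting the three steps gives \eqref{Bdd_Gen_Bdd}.

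I expect the main obstacle to be the second step: setting up the correct mixed regularity --- fully distributional (``extended'') dependence in the three tangential directions but only finite--jet (``differential'') dependence in the transverse coordinate $z$ --- and invoking the appropriate version of the Schwartz kernel theorem for sections of the relevant bundle over $\Sigma$ carrying this hybrid structure. A secondary subtlety is making the notion of causality mathematically precise when the boundary's own worldsheet may be luminal or superluminal even though any internal polaritons are subluminal, so that the relevant causal set is $J^\Sigma(\uvec)$, generated by luminal paths confined to $\Sigma$ (cf.\ Fig.~\ref{snakecharmer}), rather than the ambient light cone.
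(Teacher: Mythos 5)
Your proposal is correct and follows essentially the same route as the paper's proof: linearity to split $D_A(\uvec)$ into four single--field functionals, the maximal--order hypothesis to truncate the transverse dependence at order $k$ (the paper does this via an explicit Taylor expansion in $z$ with the remainder $z^{k+1}{\cal E}_{\mu\nu}$ annihilated by assumption, which is your jet--truncation step), a kernel representation of each resulting distribution on $\Sigma$, and causality to restrict the support/domain of integration to $J^\Sigma(\uvec)$. The only presentational difference is that you invoke the Schwartz kernel theorem and spell out the support condition explicitly, where the paper simply asserts that a distribution on the hypersurface can be written with a kernel and then restricts the integral.
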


\begin{proof}
In the following we assume we are given the $D_A(\uvec)$ only. 
We can extract the order of the BCs, as the minimum value of $k$ such that
\begin{align*}
    D_A(\uvec)[z^{k+1}F^\I_{\mu\nu},z^{k+1}H^\I_{\mu\nu},z^{k+1}F^\II_{\mu\nu},z^{k+1}H^\II_{\mu\nu}]
    =0
\end{align*}
for all $A$, and all $\Set{F^\I_{\mu\nu},H^\I_{\mu\nu},F^\II_{\mu\nu},H^\II_{\mu\nu}}$.
From linearity we have
\begin{align*}
    D_A(\uvec)[F^\I_{\mu\nu},H^\I_{\mu\nu},F^\II_{\mu\nu},H^\II_{\mu\nu}]
    =
    D^{F\I}_A(\uvec)[F^\I_{\mu\nu}] +
    D^{H\I}_A(\uvec)[H^\I_{\mu\nu}] -
    D^{F\II}_A(\uvec)[F^\II_{\mu\nu}] -
    D^{H\II}_A(\uvec)[H^\II_{\mu\nu}]
\end{align*}
where $D^{F\I}_A(\uvec)[F^\I_{\mu\nu}]=D_A(\uvec)[F^\I_{\mu\nu},0,0,0]$,
$D^{F\II}_A(\uvec)[F^\I_{\mu\nu}]=-D_A(\uvec)[0,0,F^\II_{\mu\nu},0]$, etc. We can Taylor expand 
\begin{align*}
    F^\I_{\mu\nu}
    =
    \sum_{r=0}^k \frac{z^r}{r!}\frac{\partial^r F^\I_{\mu\nu}}{\partial z^r}  \Big\vert_{z=0}
    + z^{k+1}{\cal E}_{\mu\nu}
\end{align*}
where $z^{k+1}{\cal E}_{\mu\nu}$ is the remainder. 
Now $D^{F\I}_A(\uvec)[z^{k+1}{\cal E}_{\mu\nu}]=0$ so we have
\begin{align*}
    D^{F\I}_A(\uvec)[F^\I_{\mu\nu}]
    =
    \sum_{r=0}^k \frac{1}{r!}D^{F\I}_A(\uvec)\Big[z^r \frac{\partial^r F^\I_{\mu\nu}}{\partial z^r}  \Big\vert_{z=0}\Big]
    =
    \sum_{r=0}^k D^{F\I}_{Ar}(\uvec)\Big[\frac{\partial^r F^\I_{\mu\nu}}{\partial z^r}  \Big\vert_{z=0}\Big]
\end{align*}
where the distribution $D^{F\I}_{Ar}(\uvec)$, is given by
\begin{align*}
    D^{F\I}_{Ar}(\uvec)[\alpha_{\mu\nu}]
    =
    \frac{1}{r!} D^{F\I}_A(\uvec)\Big[z^r \alpha_{\mu\nu}\Big] 
\end{align*}
where $\alpha_{\mu\nu}$ is independent of $z$.
Since each $D^{F\I}_{Ar}(\uvec)$ is a distribution only on the 3--dimensional hypersurface $\Sigma$ 
corresponding to $z=0$, we can always write this in terms of a kernel $\kappa^{\F\I\mu\nu}_{Ar}(\uvec,\uvec')$ so that
\begin{align*}
    D^{F\I}_{Ar}(\uvec)[\alpha_{\mu\nu}]
    =
    \tfrac12 \int_\Sigma
    \kappa^{\F\I }_{Ar}{}^{\mu\nu}(\uvec,\uvec')
    \,
    \alpha_{\mu\nu}
    \,d^3\uvec'
\end{align*}
However from causality we have to restrict the domain of the integral to $J^\Sigma(\uvec)$, giving
\begin{align*}
    D^{F\I}_{Ar}(\uvec)[\alpha_{\mu\nu}]
    =
    \tfrac12 \int_{J^\Sigma(\uvec)}
    \kappa^{\F\I }_{Ar}{}^{\mu\nu}(\uvec,\uvec')
    \,
    \alpha_{\mu\nu}
    \,d^3\uvec'
\end{align*}
Thus we see that $D_A(\uvec)[F^\I_{\mu\nu},H^\I_{\mu\nu},F^\II_{\mu\nu},H^\II_{\mu\nu}]$, is given by the right hand side of \eqref{Bdd_Gen_Def_R}.

\end{proof}

\section{Summary and conclusions}\label{conclusion}

In the above we have both discussed the general tensorial form of the electromagnetic boundary conditions at an arbitrary space--time interface, and given a general framework for describing these interfaces where dispersive effects are important, showing that our boundary condition (\ref{Bdd_Gen_Bdd}) encompasses all possible linear BC which conform to the conditions prescribed in section \ref{Sec:GenBddAxioms}. This gives a framework for future researchers to propose  ever increasingly interesting BCs, and can prescribe recipes for their manufacture.

We discussed the meaning of causality in the context of boundary conditions. In particular the requirement that information, such as polaritons, can only propagate in the medium, when it is temporal.

This raises some interesting questions.  The above formalism can be applied to, for instance, the problem of a `space--time corner', where the boundary has a discontinuous normal as in Figure \ref{fig:schematic}(a). This also applies to  the `space--time' prism discussed recently~\cite{li2023}.  At this point of discontinuity the conventional BCs may contradict one another, something known to be problematic in spatial BCs containing sharp corners~\cite{dobrzynski1972,pendry2017compacted}.  Extensions of the standard wedge diffraction problems~\cite{sommerfeld2004,malyuzhinets1955} (see~\cite{osipov1999,nethercote2020})  to space--time boundaries do not yet seem to have been explored.

Although we have specified the most general linear boundary conditions, we have not addressed the question as the the existence of solutions. Maxwell's equations are linear, and so are the boundary conditions presented here. Assuming that the constitutive relations for the two bulk media are also linear, then we know there is always the zero solution. However, we note that the existence of a physical, non-zero solution starting with initial conditions is not guaranteed. A simple example is a time boundary where the medium switches from being dispersive \eqref{Intr_Temp_disp_Fourier} to non-dispersive.  One can consider two options.  The first, arising from the differential equations gives, for example, $\VD^\II=\VD^I$, $\VB^\II=\VB^I$ and $\VP^\I=0$ and $\partial_t \VP^\I=0$ on the boundary. However, causality requires that the switch in the material properties can only affect the fields \emph{after} it has occurred, which is inconsistent with demanding $\VP^\I=0$ and $\partial_t \VP^\I=0$.  By contrast, one can set the BCs to be $\VD^\II=\VD^I$ and $\VB^\II=\VB^I$. In this case the boundary conditions are consistent but all the information about $\VP^\I$ is lost. 

Our formalism includes currents, such as in \eqref{eq:sheet-bc}, which are linear in the local fields. One can easily incorporate prescribed currents, by simply adding them to the right hand side of \eqref{Bdd_Gen_Bdd}. This can be used to define non trivial problems where information is given on the hypersurface boundary of 4--dimensional volume. This generalises the Dirichlet and von Neumann boundary conditions to higher order and extended dependence within the boundary. 

In principle our method can be extended to consider nonlinear BC's to any finite order wherein the kernel is integrated over each variable an appropriate number of times. For example, for a cubic response one can replace the term in \eqref{Bdd_Gen_Bdd} given by
\begin{align}
   \int_{J^\Sigma(\uvec)} d^3\uvec'
\kappa^{\F\I }_{A0}{}^{\mu\nu}(\uvec,\uvec')
\,
F^\I_{\mu\nu}(\uvec',0)
\end{align}
with
\begin{align}
\int_{J^\Sigma(\uvec)} d^3\uvec'_1
\int_{J^\Sigma(\uvec)} d^3\uvec'_2
\int_{J^\Sigma(\uvec)} d^3\uvec'_3
\kappa^{\F\I }_{A000}{}^{\mu\nu}(\uvec,\uvec'_1,\uvec'_2,\uvec'_3)
\,
F^\I_{\mu\nu}(\uvec'_1,0)
F^\I_{\mu\nu}(\uvec'_2,0)
F^\I_{\mu\nu}(\uvec'_3,0)
\end{align}
and similarly for higher derivatives.  These topics will form the focus of future research.

\subsection*{Author contributions}
JG conceptualised the ideas produced the first draft.
All authors contributed equally to writing the manuscript.

\subsection*{Funding}
JG  acknowledges support from STFC and the Cockcroft institute (ST/V001612/1).  
SARH acknowledges support from the EPSRC via the META4D Programme Grant (EP/Y015673/1).

\subsection*{Acknowledgements}
We acknowledge useful discussions with Francisco Rodriguez Fortuno.

\subsection*{Data statement}
All data generated for this article is contained within article.

\subsection*{Ethics statement}
This work did not involve human participants, animal subjects, or sensitive data, and therefore did not require ethical approval. No ethical issues were identified in the course of this research.

\bibliography{refs-edit.bib}

%apsrev4-2.bst 2019-01-14 (MD) hand-edited version of apsrev4-1.bst
%Control: key (0)
%Control: author (8) initials jnrlst
%Control: editor formatted (1) identically to author
%Control: production of article title (0) allowed
%Control: page (0) single
%Control: year (1) truncated
%Control: production of eprint (0) enabled
\begin{thebibliography}{48}%
\makeatletter
\providecommand \@ifxundefined [1]{%
 \@ifx{#1\undefined}
}%
\providecommand \@ifnum [1]{%
 \ifnum #1\expandafter \@firstoftwo
 \else \expandafter \@secondoftwo
 \fi
}%
\providecommand \@ifx [1]{%
 \ifx #1\expandafter \@firstoftwo
 \else \expandafter \@secondoftwo
 \fi
}%
\providecommand \natexlab [1]{#1}%
\providecommand \enquote  [1]{``#1''}%
\providecommand \bibnamefont  [1]{#1}%
\providecommand \bibfnamefont [1]{#1}%
\providecommand \citenamefont [1]{#1}%
\providecommand \href@noop [0]{\@secondoftwo}%
\providecommand \href [0]{\begingroup \@sanitize@url \@href}%
\providecommand \@href[1]{\@@startlink{#1}\@@href}%
\providecommand \@@href[1]{\endgroup#1\@@endlink}%
\providecommand \@sanitize@url [0]{\catcode `\\12\catcode `\$12\catcode
  `\&12\catcode `\#12\catcode `\^12\catcode `\_12\catcode `\%12\relax}%
\providecommand \@@startlink[1]{}%
\providecommand \@@endlink[0]{}%
\providecommand \url  [0]{\begingroup\@sanitize@url \@url }%
\providecommand \@url [1]{\endgroup\@href {#1}{\urlprefix }}%
\providecommand \urlprefix  [0]{URL }%
\providecommand \Eprint [0]{\href }%
\providecommand \doibase [0]{https://doi.org/}%
\providecommand \selectlanguage [0]{\@gobble}%
\providecommand \bibinfo  [0]{\@secondoftwo}%
\providecommand \bibfield  [0]{\@secondoftwo}%
\providecommand \translation [1]{[#1]}%
\providecommand \BibitemOpen [0]{}%
\providecommand \bibitemStop [0]{}%
\providecommand \bibitemNoStop [0]{.\EOS\space}%
\providecommand \EOS [0]{\spacefactor3000\relax}%
\providecommand \BibitemShut  [1]{\csname bibitem#1\endcsname}%
\let\auto@bib@innerbib\@empty
%</preamble>
\bibitem [{\citenamefont {Mackay}\ and\ \citenamefont
  {Lakhtakia}(2019)}]{mackay2019}%
  \BibitemOpen
  \bibfield  {author} {\bibinfo {author} {\bibfnamefont {T.~G.}\ \bibnamefont
  {Mackay}}\ and\ \bibinfo {author} {\bibfnamefont {A.}~\bibnamefont
  {Lakhtakia}},\ }\href@noop {} {\emph {\bibinfo {title} {Electromagnetic
  Anisotropy and Bianisotropy: A Field Guide}}}\ (\bibinfo  {publisher} {World
  Scientific},\ \bibinfo {year} {2019})\BibitemShut {NoStop}%
\bibitem [{\citenamefont {Achouri}\ \emph {et~al.}(2015)\citenamefont
  {Achouri}, \citenamefont {Salem},\ and\ \citenamefont {Caloz}}]{achouri2015}%
  \BibitemOpen
  \bibfield  {author} {\bibinfo {author} {\bibfnamefont {K.}~\bibnamefont
  {Achouri}}, \bibinfo {author} {\bibfnamefont {M.~A.}\ \bibnamefont {Salem}},\
  and\ \bibinfo {author} {\bibfnamefont {C.}~\bibnamefont {Caloz}},\ }\bibfield
   {title} {\bibinfo {title} {General metasurface synthesis based on
  susceptibility tensors},\ }\href@noop {} {\bibfield  {journal} {\bibinfo
  {journal} {IEEE Trans. Ant. Prop.}\ }\textbf {\bibinfo {volume} {63}},\
  \bibinfo {pages} {2977} (\bibinfo {year} {2015})}\BibitemShut {NoStop}%
\bibitem [{\citenamefont {Galiffi}\ \emph {et~al.}(2022)\citenamefont
  {Galiffi}, \citenamefont {Tirole}, \citenamefont {Yin}, \citenamefont {Li},
  \citenamefont {Vezzoli}, \citenamefont {Huidobro}, \citenamefont
  {Silveirinha}, \citenamefont {Sapienza}, \citenamefont {Al{\'u}},\ and\
  \citenamefont {Pendry}}]{galiffi2022}%
  \BibitemOpen
  \bibfield  {author} {\bibinfo {author} {\bibfnamefont {E.}~\bibnamefont
  {Galiffi}}, \bibinfo {author} {\bibfnamefont {R.}~\bibnamefont {Tirole}},
  \bibinfo {author} {\bibfnamefont {S.}~\bibnamefont {Yin}}, \bibinfo {author}
  {\bibfnamefont {H.}~\bibnamefont {Li}}, \bibinfo {author} {\bibfnamefont
  {S.}~\bibnamefont {Vezzoli}}, \bibinfo {author} {\bibfnamefont {P.~A.}\
  \bibnamefont {Huidobro}}, \bibinfo {author} {\bibfnamefont {M.~G.}\
  \bibnamefont {Silveirinha}}, \bibinfo {author} {\bibfnamefont
  {R.}~\bibnamefont {Sapienza}}, \bibinfo {author} {\bibfnamefont
  {A.}~\bibnamefont {Al{\'u}}},\ and\ \bibinfo {author} {\bibfnamefont {J.~B.}\
  \bibnamefont {Pendry}},\ }\bibfield  {title} {\bibinfo {title} {Photonics of
  time-varying media},\ }\href@noop {} {\bibfield  {journal} {\bibinfo
  {journal} {Advanced Photonics}\ ,\ \bibinfo {pages} {014002}} (\bibinfo
  {year} {2022})}\BibitemShut {NoStop}%
\bibitem [{\citenamefont {Taravati}\ and\ \citenamefont
  {Eleftheriades}(2021)}]{taravati2021}%
  \BibitemOpen
  \bibfield  {author} {\bibinfo {author} {\bibfnamefont {S.}~\bibnamefont
  {Taravati}}\ and\ \bibinfo {author} {\bibfnamefont {G.~V.}\ \bibnamefont
  {Eleftheriades}},\ }\bibfield  {title} {\bibinfo {title} {Space-time
  metasurfaces: Analysis, design and applications},\ }in\ \href
  {https://doi.org/10.23919/EuCAP51087.2021.9411494} {\emph {\bibinfo
  {booktitle} {2021 15th European Conference on Antennas and Propagation
  (EuCAP)}}}\ (\bibinfo {year} {2021})\ pp.\ \bibinfo {pages}
  {1--5}\BibitemShut {NoStop}%
\bibitem [{\citenamefont {Munk}(2005)}]{munk2005}%
  \BibitemOpen
  \bibfield  {author} {\bibinfo {author} {\bibfnamefont {B.~A.}\ \bibnamefont
  {Munk}},\ }\href@noop {} {\emph {\bibinfo {title} {Frequency Selective
  Surfaces, Theory and Design}}}\ (\bibinfo  {publisher} {Wiley},\ \bibinfo
  {year} {2005})\BibitemShut {NoStop}%
\bibitem [{\citenamefont {Assouar}\ \emph {et~al.}(2018)\citenamefont
  {Assouar}, \citenamefont {Liang}, \citenamefont {Wu}, \citenamefont {Li},
  \citenamefont {Cheng},\ and\ \citenamefont {Jing}}]{assouar2018}%
  \BibitemOpen
  \bibfield  {author} {\bibinfo {author} {\bibfnamefont {B.}~\bibnamefont
  {Assouar}}, \bibinfo {author} {\bibfnamefont {B.}~\bibnamefont {Liang}},
  \bibinfo {author} {\bibfnamefont {Y.}~\bibnamefont {Wu}}, \bibinfo {author}
  {\bibfnamefont {Y.}~\bibnamefont {Li}}, \bibinfo {author} {\bibfnamefont
  {J.-C.}\ \bibnamefont {Cheng}},\ and\ \bibinfo {author} {\bibfnamefont
  {Y.}~\bibnamefont {Jing}},\ }\bibfield  {title} {\bibinfo {title} {Acoustic
  metasurfaces},\ }\href@noop {} {\bibfield  {journal} {\bibinfo  {journal}
  {Nature Reviews Materials}\ }\textbf {\bibinfo {volume} {3}},\ \bibinfo
  {pages} {460} (\bibinfo {year} {2018})}\BibitemShut {NoStop}%
\bibitem [{\citenamefont {Chen}\ \emph {et~al.}(2022)\citenamefont {Chen},
  \citenamefont {Wang}, \citenamefont {Wang}, \citenamefont {Zhou},\ and\
  \citenamefont {Yuan}}]{chen2022}%
  \BibitemOpen
  \bibfield  {author} {\bibinfo {author} {\bibfnamefont {A.-L.}\ \bibnamefont
  {Chen}}, \bibinfo {author} {\bibfnamefont {Y.-S.}\ \bibnamefont {Wang}},
  \bibinfo {author} {\bibfnamefont {Y.-F.}\ \bibnamefont {Wang}}, \bibinfo
  {author} {\bibfnamefont {H.-T.}\ \bibnamefont {Zhou}},\ and\ \bibinfo
  {author} {\bibfnamefont {S.-M.}\ \bibnamefont {Yuan}},\ }\bibfield  {title}
  {\bibinfo {title} {{Design of Acoustic/Elastic Phase Gradient Metasurfaces:
  Principles, Functional Elements, Tunability, and Coding}},\ }\href@noop {}
  {\bibfield  {journal} {\bibinfo  {journal} {Applied Mechanics Reviews}\
  }\textbf {\bibinfo {volume} {74}},\ \bibinfo {pages} {020801} (\bibinfo
  {year} {2022})}\BibitemShut {NoStop}%
\bibitem [{\citenamefont {Chen}\ \emph {et~al.}(2016)\citenamefont {Chen},
  \citenamefont {Taylor},\ and\ \citenamefont {Yu}}]{chen2016}%
  \BibitemOpen
  \bibfield  {author} {\bibinfo {author} {\bibfnamefont {H.~T.}\ \bibnamefont
  {Chen}}, \bibinfo {author} {\bibfnamefont {A.~J.}\ \bibnamefont {Taylor}},\
  and\ \bibinfo {author} {\bibfnamefont {N.}~\bibnamefont {Yu}},\ }\bibfield
  {title} {\bibinfo {title} {A review of metasurfaces: physics and
  applications},\ }\href@noop {} {\bibfield  {journal} {\bibinfo  {journal}
  {Rep. Prog. Phys.}\ }\textbf {\bibinfo {volume} {79}},\ \bibinfo {pages}
  {076401} (\bibinfo {year} {2016})}\BibitemShut {NoStop}%
\bibitem [{\citenamefont {Wang}\ \emph {et~al.}(2021)\citenamefont {Wang},
  \citenamefont {Qin},\ and\ \citenamefont {Xu}}]{wang2021}%
  \BibitemOpen
  \bibfield  {author} {\bibinfo {author} {\bibfnamefont {J.}~\bibnamefont
  {Wang}}, \bibinfo {author} {\bibfnamefont {L.}~\bibnamefont {Qin}},\ and\
  \bibinfo {author} {\bibfnamefont {W.}~\bibnamefont {Xu}},\ }\bibfield
  {title} {\bibinfo {title} {Flexible and high precision thermal metasurface},\
  }\href@noop {} {\bibfield  {journal} {\bibinfo  {journal} {Communications
  Materials}\ }\textbf {\bibinfo {volume} {2}},\ \bibinfo {pages} {89}
  (\bibinfo {year} {2021})}\BibitemShut {NoStop}%
\bibitem [{\citenamefont {Achouri}\ and\ \citenamefont
  {Caloz}(2021)}]{achouri2021}%
  \BibitemOpen
  \bibfield  {author} {\bibinfo {author} {\bibfnamefont {K.}~\bibnamefont
  {Achouri}}\ and\ \bibinfo {author} {\bibfnamefont {C.}~\bibnamefont
  {Caloz}},\ }\href@noop {} {\emph {\bibinfo {title} {Electromagnetic
  Metasurfaces: Theory and Applications}}}\ (\bibinfo  {publisher} {John Wiley
  and Sons},\ \bibinfo {year} {2021})\BibitemShut {NoStop}%
\bibitem [{\citenamefont {Tretyakov}(2003)}]{tretyakov2003}%
  \BibitemOpen
  \bibfield  {author} {\bibinfo {author} {\bibfnamefont {S.~A.}\ \bibnamefont
  {Tretyakov}},\ }\href@noop {} {\emph {\bibinfo {title} {Analytical Modeling
  in Applied Electromagnetics}}}\ (\bibinfo  {publisher} {Artech House},\
  \bibinfo {year} {2003})\BibitemShut {NoStop}%
\bibitem [{\citenamefont {Wu}\ \emph {et~al.}(2018)\citenamefont {Wu},
  \citenamefont {Coquet}, \citenamefont {Wang},\ and\ \citenamefont
  {Genevet}}]{wu2018}%
  \BibitemOpen
  \bibfield  {author} {\bibinfo {author} {\bibfnamefont {K.}~\bibnamefont
  {Wu}}, \bibinfo {author} {\bibfnamefont {P.}~\bibnamefont {Coquet}}, \bibinfo
  {author} {\bibfnamefont {Q.~J.}\ \bibnamefont {Wang}},\ and\ \bibinfo
  {author} {\bibfnamefont {P.}~\bibnamefont {Genevet}},\ }\bibfield  {title}
  {\bibinfo {title} {Modelling of free-form conformal metasurfaces},\
  }\href@noop {} {\bibfield  {journal} {\bibinfo  {journal} {Nat. Comm.}\
  }\textbf {\bibinfo {volume} {9}},\ \bibinfo {pages} {3494} (\bibinfo {year}
  {2018})}\BibitemShut {NoStop}%
\bibitem [{\citenamefont {Lebbe}\ \emph {et~al.}(2023)\citenamefont {Lebbe},
  \citenamefont {Maurel},\ and\ \citenamefont {Pham}}]{lebbe2023}%
  \BibitemOpen
  \bibfield  {author} {\bibinfo {author} {\bibfnamefont {N.}~\bibnamefont
  {Lebbe}}, \bibinfo {author} {\bibfnamefont {A.}~\bibnamefont {Maurel}},\ and\
  \bibinfo {author} {\bibfnamefont {K.}~\bibnamefont {Pham}},\ }\bibfield
  {title} {\bibinfo {title} {Homogenized transition conditions for plasmonic
  metasurfaces},\ }\href@noop {} {\bibfield  {journal} {\bibinfo  {journal}
  {Phys. Rev. B}\ }\textbf {\bibinfo {volume} {107}},\ \bibinfo {pages}
  {085124} (\bibinfo {year} {2023})}\BibitemShut {NoStop}%
\bibitem [{\citenamefont {Holloway}\ \emph {et~al.}(2009)\citenamefont
  {Holloway}, \citenamefont {Dienstfrey}, \citenamefont {Kuester},
  \citenamefont {O'Hara}, \citenamefont {Azad},\ and\ \citenamefont
  {Taylor}}]{holloway2009}%
  \BibitemOpen
  \bibfield  {author} {\bibinfo {author} {\bibfnamefont {C.~J.}\ \bibnamefont
  {Holloway}}, \bibinfo {author} {\bibfnamefont {A.}~\bibnamefont
  {Dienstfrey}}, \bibinfo {author} {\bibfnamefont {E.~F.}\ \bibnamefont
  {Kuester}}, \bibinfo {author} {\bibfnamefont {J.~F.}\ \bibnamefont {O'Hara}},
  \bibinfo {author} {\bibfnamefont {A.~K.}\ \bibnamefont {Azad}},\ and\
  \bibinfo {author} {\bibfnamefont {A.~J.}\ \bibnamefont {Taylor}},\ }\bibfield
   {title} {\bibinfo {title} {A discussion on the interpretation and
  characterization of metafilms/metasurfaces: The two-dimensional equivalent of
  metamaterials},\ }\href@noop {} {\bibfield  {journal} {\bibinfo  {journal}
  {Metamaterials}\ }\textbf {\bibinfo {volume} {3}},\ \bibinfo {pages} {100}
  (\bibinfo {year} {2009})}\BibitemShut {NoStop}%
\bibitem [{\citenamefont {Zalu{\u{s}}ki}\ \emph {et~al.}(2016)\citenamefont
  {Zalu{\u{s}}ki}, \citenamefont {Grbic},\ and\ \citenamefont
  {Hrabar}}]{zaluski2016}%
  \BibitemOpen
  \bibfield  {author} {\bibinfo {author} {\bibfnamefont {D.}~\bibnamefont
  {Zalu{\u{s}}ki}}, \bibinfo {author} {\bibfnamefont {A.}~\bibnamefont
  {Grbic}},\ and\ \bibinfo {author} {\bibfnamefont {S.}~\bibnamefont
  {Hrabar}},\ }\bibfield  {title} {\bibinfo {title} {Analytical and
  experimental characterization of metasurfaces with normal polarizability},\
  }\href@noop {} {\bibfield  {journal} {\bibinfo  {journal} {Phys. Rev. B}\
  }\textbf {\bibinfo {volume} {93}},\ \bibinfo {pages} {155156} (\bibinfo
  {year} {2016})}\BibitemShut {NoStop}%
\bibitem [{\citenamefont {Senior}\ and\ \citenamefont
  {Volakis}(1995)}]{senior1995}%
  \BibitemOpen
  \bibfield  {author} {\bibinfo {author} {\bibfnamefont {T.~B.~A.}\
  \bibnamefont {Senior}}\ and\ \bibinfo {author} {\bibfnamefont {J.~L.}\
  \bibnamefont {Volakis}},\ }\href@noop {} {\emph {\bibinfo {title}
  {Approximate boundary conditions in electromagnetics}}}\ (\bibinfo
  {publisher} {IEE Publication Series},\ \bibinfo {year} {1995})\BibitemShut
  {NoStop}%
\bibitem [{\citenamefont {Maci}\ \emph {et~al.}(2011)\citenamefont {Maci},
  \citenamefont {Minatti}, \citenamefont {Casaletti},\ and\ \citenamefont
  {Bosiljevac}}]{maci2011}%
  \BibitemOpen
  \bibfield  {author} {\bibinfo {author} {\bibfnamefont {S.}~\bibnamefont
  {Maci}}, \bibinfo {author} {\bibfnamefont {G.}~\bibnamefont {Minatti}},
  \bibinfo {author} {\bibfnamefont {M.}~\bibnamefont {Casaletti}},\ and\
  \bibinfo {author} {\bibfnamefont {M.}~\bibnamefont {Bosiljevac}},\ }\bibfield
   {title} {\bibinfo {title} {Metasurfing: Addressing waves on impenetrable
  metasurfaces},\ }\href@noop {} {\bibfield  {journal} {\bibinfo  {journal}
  {IEEE Antennas and Wireless Propagation Letters}\ }\textbf {\bibinfo {volume}
  {10}},\ \bibinfo {pages} {1499} (\bibinfo {year} {2011})}\BibitemShut
  {NoStop}%
\bibitem [{\citenamefont {Gustafson}\ and\ \citenamefont
  {Abe}(1998)}]{gustafson1998}%
  \BibitemOpen
  \bibfield  {author} {\bibinfo {author} {\bibfnamefont {K.}~\bibnamefont
  {Gustafson}}\ and\ \bibinfo {author} {\bibfnamefont {T.}~\bibnamefont
  {Abe}},\ }\bibfield  {title} {\bibinfo {title} {The third boundary
  condition—was it {R}obin’s?},\ }\href@noop {} {\bibfield  {journal}
  {\bibinfo  {journal} {Mathematical Intelligencer}\ }\textbf {\bibinfo
  {volume} {20}},\ \bibinfo {pages} {63} (\bibinfo {year} {1998})}\BibitemShut
  {NoStop}%
\bibitem [{\citenamefont {Lindell}\ and\ \citenamefont
  {Sihvola}(2017)}]{lindell2017}%
  \BibitemOpen
  \bibfield  {author} {\bibinfo {author} {\bibfnamefont {I.~V.}\ \bibnamefont
  {Lindell}}\ and\ \bibinfo {author} {\bibfnamefont {A.}~\bibnamefont
  {Sihvola}},\ }\bibfield  {title} {\bibinfo {title} {Electromagnetic wave
  reflection from boundaries defined by general linear and local conditions},\
  }\href@noop {} {\bibfield  {journal} {\bibinfo  {journal} {IEEE Transactions
  on Antennas and Propagation}\ }\textbf {\bibinfo {volume} {65}},\ \bibinfo
  {pages} {4656} (\bibinfo {year} {2017})}\BibitemShut {NoStop}%
\bibitem [{\citenamefont {Fleury}\ \emph {et~al.}(2014)\citenamefont {Fleury},
  \citenamefont {Sounas},\ and\ \citenamefont {Al{\'u'}}}]{fleury2014}%
  \BibitemOpen
  \bibfield  {author} {\bibinfo {author} {\bibfnamefont {R.}~\bibnamefont
  {Fleury}}, \bibinfo {author} {\bibfnamefont {D.~L.}\ \bibnamefont {Sounas}},\
  and\ \bibinfo {author} {\bibfnamefont {A.}~\bibnamefont {Al{\'u'}}},\
  }\bibfield  {title} {\bibinfo {title} {Negative refraction and planar
  focusing based on parity-time symmetric metasurfaces},\ }\href@noop {}
  {\bibfield  {journal} {\bibinfo  {journal} {Phys. Rev. Lett.}\ }\textbf
  {\bibinfo {volume} {113}},\ \bibinfo {pages} {023903} (\bibinfo {year}
  {2014})}\BibitemShut {NoStop}%
\bibitem [{\citenamefont {Tapar}\ and\ \citenamefont
  {Kishen}(2021)}]{tapar2021}%
  \BibitemOpen
  \bibfield  {author} {\bibinfo {author} {\bibfnamefont {J.}~\bibnamefont
  {Tapar}}\ and\ \bibinfo {author} {\bibfnamefont {N.~K.}\ \bibnamefont
  {Kishen}, \bibfnamefont {S.~andEmani}},\ }\bibfield  {title} {\bibinfo
  {title} {Dynamically tunable asymmetric transmission in pt-symmetric phase
  gradient metasurface},\ }\href@noop {} {\bibfield  {journal} {\bibinfo
  {journal} {ACS Photonics}\ }\textbf {\bibinfo {volume} {8}},\ \bibinfo
  {pages} {3315} (\bibinfo {year} {2021})}\BibitemShut {NoStop}%
\bibitem [{\citenamefont {Y.}\ \emph {et~al.}(2022)\citenamefont {Y.},
  \citenamefont {Liang}, \citenamefont {Li}, \citenamefont {Tsai},\ and\
  \citenamefont {Zhang}}]{fan2022}%
  \BibitemOpen
  \bibfield  {author} {\bibinfo {author} {\bibfnamefont {F.}~\bibnamefont
  {Y.}}, \bibinfo {author} {\bibfnamefont {H.}~\bibnamefont {Liang}}, \bibinfo
  {author} {\bibfnamefont {J.}~\bibnamefont {Li}}, \bibinfo {author}
  {\bibfnamefont {D.~P.}\ \bibnamefont {Tsai}},\ and\ \bibinfo {author}
  {\bibfnamefont {S.}~\bibnamefont {Zhang}},\ }\bibfield  {title} {\bibinfo
  {title} {Emerging trend in unconventional metasurfaces: From nonlinear,
  non-hermitian to nonclassical metasurfaces},\ }\href@noop {} {\bibfield
  {journal} {\bibinfo  {journal} {ACS Photonics}\ }\textbf {\bibinfo {volume}
  {9}},\ \bibinfo {pages} {2872} (\bibinfo {year} {2022})}\BibitemShut
  {NoStop}%
\bibitem [{\citenamefont {Zhang}\ \emph {et~al.}(2018)\citenamefont {Zhang},
  \citenamefont {Chen}, \citenamefont {Liu}, \citenamefont {Zhang},
  \citenamefont {Zhao}, \citenamefont {Dai}, \citenamefont {Bai}, \citenamefont
  {Wan}, \citenamefont {Cheng}, \citenamefont {Castaldi}, \citenamefont
  {Galdi},\ and\ \citenamefont {Cui}}]{zhang2018}%
  \BibitemOpen
  \bibfield  {author} {\bibinfo {author} {\bibfnamefont {L.}~\bibnamefont
  {Zhang}}, \bibinfo {author} {\bibfnamefont {X.~Q.}\ \bibnamefont {Chen}},
  \bibinfo {author} {\bibfnamefont {S.}~\bibnamefont {Liu}}, \bibinfo {author}
  {\bibfnamefont {Q.}~\bibnamefont {Zhang}}, \bibinfo {author} {\bibfnamefont
  {J.}~\bibnamefont {Zhao}}, \bibinfo {author} {\bibfnamefont {J.~Y.}\
  \bibnamefont {Dai}}, \bibinfo {author} {\bibfnamefont {G.~D.}\ \bibnamefont
  {Bai}}, \bibinfo {author} {\bibfnamefont {X.}~\bibnamefont {Wan}}, \bibinfo
  {author} {\bibfnamefont {Q.}~\bibnamefont {Cheng}}, \bibinfo {author}
  {\bibfnamefont {G.}~\bibnamefont {Castaldi}}, \bibinfo {author}
  {\bibfnamefont {V.}~\bibnamefont {Galdi}},\ and\ \bibinfo {author}
  {\bibfnamefont {T.~J.}\ \bibnamefont {Cui}},\ }\bibfield  {title} {\bibinfo
  {title} {Space-time-coding digital metasurfaces},\ }\href@noop {} {\bibfield
  {journal} {\bibinfo  {journal} {Nature Comm.}\ }\textbf {\bibinfo {volume}
  {9}},\ \bibinfo {pages} {4334} (\bibinfo {year} {2018})}\BibitemShut
  {NoStop}%
\bibitem [{\citenamefont {Li}\ \emph {et~al.}(2020)\citenamefont {Li},
  \citenamefont {Li}, \citenamefont {Long}, \citenamefont {Forati},
  \citenamefont {Du},\ and\ \citenamefont {Sievenpiper}}]{li2020}%
  \BibitemOpen
  \bibfield  {author} {\bibinfo {author} {\bibfnamefont {A.}~\bibnamefont
  {Li}}, \bibinfo {author} {\bibfnamefont {Y.}~\bibnamefont {Li}}, \bibinfo
  {author} {\bibfnamefont {J.}~\bibnamefont {Long}}, \bibinfo {author}
  {\bibfnamefont {E.}~\bibnamefont {Forati}}, \bibinfo {author} {\bibfnamefont
  {Z.}~\bibnamefont {Du}},\ and\ \bibinfo {author} {\bibfnamefont
  {D.}~\bibnamefont {Sievenpiper}},\ }\bibfield  {title} {\bibinfo {title}
  {Time-modulated nonreciprocal metasurface absorber for surface waves},\
  }\href@noop {} {\bibfield  {journal} {\bibinfo  {journal} {Optics Letters}\
  }\textbf {\bibinfo {volume} {45}},\ \bibinfo {pages} {1212} (\bibinfo {year}
  {2020})}\BibitemShut {NoStop}%
\bibitem [{\citenamefont {Wang}\ \emph {et~al.}(2020)\citenamefont {Wang},
  \citenamefont {D{\'i}az-Rubio}, \citenamefont {Li}, \citenamefont
  {Tretyakov},\ and\ \citenamefont {Al{\'u}}}]{wang2020}%
  \BibitemOpen
  \bibfield  {author} {\bibinfo {author} {\bibfnamefont {X.}~\bibnamefont
  {Wang}}, \bibinfo {author} {\bibfnamefont {A.}~\bibnamefont
  {D{\'i}az-Rubio}}, \bibinfo {author} {\bibfnamefont {H.}~\bibnamefont {Li}},
  \bibinfo {author} {\bibfnamefont {S.~A.}\ \bibnamefont {Tretyakov}},\ and\
  \bibinfo {author} {\bibfnamefont {A.}~\bibnamefont {Al{\'u}}},\ }\bibfield
  {title} {\bibinfo {title} {Theory and design of multifunctional space-time
  metasurfaces},\ }\href@noop {} {\bibfield  {journal} {\bibinfo  {journal}
  {Phys. Rev. Appl.}\ }\textbf {\bibinfo {volume} {13}},\ \bibinfo {pages}
  {044040} (\bibinfo {year} {2020})}\BibitemShut {NoStop}%
\bibitem [{\citenamefont {Oue}\ \emph {et~al.}(2023)\citenamefont {Oue},
  \citenamefont {Ding},\ and\ \citenamefont {Pendry}}]{oue2023}%
  \BibitemOpen
  \bibfield  {author} {\bibinfo {author} {\bibfnamefont {D.}~\bibnamefont
  {Oue}}, \bibinfo {author} {\bibfnamefont {K.}~\bibnamefont {Ding}},\ and\
  \bibinfo {author} {\bibfnamefont {J.~B.}\ \bibnamefont {Pendry}},\ }\bibfield
   {title} {\bibinfo {title} {Noncontact frictional force between surfaces by
  peristaltic permittivity modulation},\ }\href@noop {} {\bibfield  {journal}
  {\bibinfo  {journal} {Physical Review A}\ }\textbf {\bibinfo {volume}
  {107}},\ \bibinfo {pages} {063501} (\bibinfo {year} {2023})}\BibitemShut
  {NoStop}%
\bibitem [{\citenamefont {Harwood}\ \emph {et~al.}(2024)\citenamefont
  {Harwood}, \citenamefont {Vezzoli}, \citenamefont {Raziman}, \citenamefont
  {Hooper}, \citenamefont {Tirole}, \citenamefont {Wu}, \citenamefont {Maier},
  \citenamefont {Pendry}, \citenamefont {Horsley},\ and\ \citenamefont
  {Sapienza}}]{harwood2024}%
  \BibitemOpen
  \bibfield  {author} {\bibinfo {author} {\bibfnamefont {A.~C.}\ \bibnamefont
  {Harwood}}, \bibinfo {author} {\bibfnamefont {S.}~\bibnamefont {Vezzoli}},
  \bibinfo {author} {\bibfnamefont {T.~V.}\ \bibnamefont {Raziman}}, \bibinfo
  {author} {\bibfnamefont {C.}~\bibnamefont {Hooper}}, \bibinfo {author}
  {\bibfnamefont {R.}~\bibnamefont {Tirole}}, \bibinfo {author} {\bibfnamefont
  {F.}~\bibnamefont {Wu}}, \bibinfo {author} {\bibfnamefont {S.}~\bibnamefont
  {Maier}}, \bibinfo {author} {\bibfnamefont {J.~B.}\ \bibnamefont {Pendry}},
  \bibinfo {author} {\bibfnamefont {S.~A.~R.}\ \bibnamefont {Horsley}},\ and\
  \bibinfo {author} {\bibfnamefont {R.}~\bibnamefont {Sapienza}},\ }\bibfield
  {title} {\bibinfo {title} {Super-luminal synthetic motion with a space-time
  optical metasurface},\ }\href@noop {} {\bibfield  {journal} {\bibinfo
  {journal} {arXiv:2407.10809}\ } (\bibinfo {year} {2024})}\BibitemShut
  {NoStop}%
\bibitem [{\citenamefont {Caloz}\ and\ \citenamefont
  {Deck-L{\'e}ger}(2020)}]{caloz2019a}%
  \BibitemOpen
  \bibfield  {author} {\bibinfo {author} {\bibfnamefont {C.}~\bibnamefont
  {Caloz}}\ and\ \bibinfo {author} {\bibfnamefont {Z.-L.}\ \bibnamefont
  {Deck-L{\'e}ger}},\ }\bibfield  {title} {\bibinfo {title} {Spacetime
  metamaterials, part i: General concepts},\ }\href
  {https://doi.org/10.1109/TAP.2019.2944216} {\bibfield  {journal} {\bibinfo
  {journal} {IEEE Transactions on Antennas and Propagation}\ }\textbf {\bibinfo
  {volume} {68}},\ \bibinfo {pages} {1569} (\bibinfo {year}
  {2020})}\BibitemShut {NoStop}%
\bibitem [{\citenamefont {Caloz}\ and\ \citenamefont
  {Deck-L{\'e}ger}(2019)}]{caloz2019b}%
  \BibitemOpen
  \bibfield  {author} {\bibinfo {author} {\bibfnamefont {C.}~\bibnamefont
  {Caloz}}\ and\ \bibinfo {author} {\bibfnamefont {Z.-L.}\ \bibnamefont
  {Deck-L{\'e}ger}},\ }\bibfield  {title} {\bibinfo {title} {Spacetime
  metamaterials, part ii: Theory and applications},\ }\href
  {https://doi.org/10.1109/TAP.2019.2944216} {\bibfield  {journal} {\bibinfo
  {journal} {IEEE Transactions on Antennas and Propagation}\ }\textbf {\bibinfo
  {volume} {68}},\ \bibinfo {pages} {1583} (\bibinfo {year}
  {2019})}\BibitemShut {NoStop}%
\bibitem [{\citenamefont {Li}\ \emph {et~al.}(2023{\natexlab{a}})\citenamefont
  {Li}, \citenamefont {Ma}, \citenamefont {Bahrami}, \citenamefont
  {Deck-L{\'e'}ger},\ and\ \citenamefont {Caloz}}]{caloz2023}%
  \BibitemOpen
  \bibfield  {author} {\bibinfo {author} {\bibfnamefont {Z.}~\bibnamefont
  {Li}}, \bibinfo {author} {\bibfnamefont {X.}~\bibnamefont {Ma}}, \bibinfo
  {author} {\bibfnamefont {A.}~\bibnamefont {Bahrami}}, \bibinfo {author}
  {\bibfnamefont {Z.-L.}\ \bibnamefont {Deck-L{\'e'}ger}},\ and\ \bibinfo
  {author} {\bibfnamefont {C.}~\bibnamefont {Caloz}},\ }\bibfield  {title}
  {\bibinfo {title} {Space-time fresnel prism},\ }\href@noop {} {\bibfield
  {journal} {\bibinfo  {journal} {Phys. Rev. Applied}\ }\textbf {\bibinfo
  {volume} {20}},\ \bibinfo {pages} {054029} (\bibinfo {year}
  {2023}{\natexlab{a}})}\BibitemShut {NoStop}%
\bibitem [{\citenamefont {Milton}\ and\ \citenamefont
  {Mattei}(2017)}]{milton2017}%
  \BibitemOpen
  \bibfield  {author} {\bibinfo {author} {\bibfnamefont {G.~W.}\ \bibnamefont
  {Milton}}\ and\ \bibinfo {author} {\bibfnamefont {O.}~\bibnamefont
  {Mattei}},\ }\bibfield  {title} {\bibinfo {title} {Field patterns: a new
  mathematical object},\ }\href@noop {} {\bibfield  {journal} {\bibinfo
  {journal} {Proc. Roy. Soc. A}\ }\textbf {\bibinfo {volume} {473}},\ \bibinfo
  {pages} {20160816} (\bibinfo {year} {2017})}\BibitemShut {NoStop}%
\bibitem [{\citenamefont {Mostafa}\ \emph {et~al.}(2024)\citenamefont
  {Mostafa}, \citenamefont {Mirmoosa}, \citenamefont {Sidorenko}, \citenamefont
  {Asadchy},\ and\ \citenamefont {Tretyakov}}]{mostafa2024temporal}%
  \BibitemOpen
  \bibfield  {author} {\bibinfo {author} {\bibfnamefont {M.}~\bibnamefont
  {Mostafa}}, \bibinfo {author} {\bibfnamefont {M.}~\bibnamefont {Mirmoosa}},
  \bibinfo {author} {\bibfnamefont {M.}~\bibnamefont {Sidorenko}}, \bibinfo
  {author} {\bibfnamefont {V.}~\bibnamefont {Asadchy}},\ and\ \bibinfo {author}
  {\bibfnamefont {S.}~\bibnamefont {Tretyakov}},\ }\bibfield  {title} {\bibinfo
  {title} {Temporal interfaces in complex electromagnetic materials: an
  overview},\ }\href@noop {} {\bibfield  {journal} {\bibinfo  {journal}
  {Optical Materials Express}\ }\textbf {\bibinfo {volume} {14}},\ \bibinfo
  {pages} {1103} (\bibinfo {year} {2024})}\BibitemShut {NoStop}%
\bibitem [{\citenamefont {Li}\ \emph {et~al.}(2022)\citenamefont {Li},
  \citenamefont {Yin},\ and\ \citenamefont {Al{\`u}}}]{li2022nonreciprocity}%
  \BibitemOpen
  \bibfield  {author} {\bibinfo {author} {\bibfnamefont {H.}~\bibnamefont
  {Li}}, \bibinfo {author} {\bibfnamefont {S.}~\bibnamefont {Yin}},\ and\
  \bibinfo {author} {\bibfnamefont {A.}~\bibnamefont {Al{\`u}}},\ }\bibfield
  {title} {\bibinfo {title} {Nonreciprocity and faraday rotation at time
  interfaces},\ }\href@noop {} {\bibfield  {journal} {\bibinfo  {journal}
  {Physical Review Letters}\ }\textbf {\bibinfo {volume} {128}},\ \bibinfo
  {pages} {173901} (\bibinfo {year} {2022})}\BibitemShut {NoStop}%
\bibitem [{\citenamefont {Caloz}\ \emph {et~al.}(2022)\citenamefont {Caloz},
  \citenamefont {Deck-L{\'e}ger}, \citenamefont {Bahrami}, \citenamefont
  {Vicente},\ and\ \citenamefont {Li}}]{caloz2022generalized}%
  \BibitemOpen
  \bibfield  {author} {\bibinfo {author} {\bibfnamefont {C.}~\bibnamefont
  {Caloz}}, \bibinfo {author} {\bibfnamefont {Z.-L.}\ \bibnamefont
  {Deck-L{\'e}ger}}, \bibinfo {author} {\bibfnamefont {A.}~\bibnamefont
  {Bahrami}}, \bibinfo {author} {\bibfnamefont {O.~C.}\ \bibnamefont
  {Vicente}},\ and\ \bibinfo {author} {\bibfnamefont {Z.}~\bibnamefont {Li}},\
  }\bibfield  {title} {\bibinfo {title} {Generalized space-time engineered
  modulation (gstem) metamaterials},\ }\href@noop {} {\bibfield  {journal}
  {\bibinfo  {journal} {arXiv preprint arXiv:2207.06539}\ } (\bibinfo {year}
  {2022})}\BibitemShut {NoStop}%
\bibitem [{\citenamefont {Jackson}(1962)}]{jackson1962}%
  \BibitemOpen
  \bibfield  {author} {\bibinfo {author} {\bibfnamefont {J.~D.}\ \bibnamefont
  {Jackson}},\ }\href@noop {} {\emph {\bibinfo {title} {Classical
  Electrodynamics}}}\ (\bibinfo  {publisher} {John Wiley and Sons},\ \bibinfo
  {year} {1962})\BibitemShut {NoStop}%
\bibitem [{\citenamefont {Pekar}(1957)}]{pekar1957}%
  \BibitemOpen
  \bibfield  {author} {\bibinfo {author} {\bibfnamefont {S.~I.}\ \bibnamefont
  {Pekar}},\ }\bibfield  {title} {\bibinfo {title} {Dispersion of light in the
  exciton absorption region of crystals},\ }\href@noop {} {\bibfield  {journal}
  {\bibinfo  {journal} {Zh. Eksp. Teor. Fiz.}\ }\textbf {\bibinfo {volume}
  {33}},\ \bibinfo {pages} {1022} (\bibinfo {year} {1957})}\BibitemShut
  {NoStop}%
\bibitem [{\citenamefont {Gratus}\ \emph {et~al.}(2021)\citenamefont {Gratus},
  \citenamefont {Seviour}, \citenamefont {Kinsler},\ and\ \citenamefont
  {Jaroszynski}}]{gratus2021temporal}%
  \BibitemOpen
  \bibfield  {author} {\bibinfo {author} {\bibfnamefont {J.}~\bibnamefont
  {Gratus}}, \bibinfo {author} {\bibfnamefont {R.}~\bibnamefont {Seviour}},
  \bibinfo {author} {\bibfnamefont {P.}~\bibnamefont {Kinsler}},\ and\ \bibinfo
  {author} {\bibfnamefont {D.~A.}\ \bibnamefont {Jaroszynski}},\ }\bibfield
  {title} {\bibinfo {title} {Temporal boundaries in electromagnetic
  materials},\ }\href@noop {} {\bibfield  {journal} {\bibinfo  {journal} {New
  Journal of Physics}\ }\textbf {\bibinfo {volume} {23}},\ \bibinfo {pages}
  {083032} (\bibinfo {year} {2021})}\BibitemShut {NoStop}%
\bibitem [{\citenamefont {S.}\ and\ \citenamefont {Ryzhik}(2015)}]{gradshteyn}%
  \BibitemOpen
  \bibfield  {author} {\bibinfo {author} {\bibfnamefont {G.~I.}\ \bibnamefont
  {S.}}\ and\ \bibinfo {author} {\bibfnamefont {I.~M.}\ \bibnamefont
  {Ryzhik}},\ }\href@noop {} {\emph {\bibinfo {title} {Table of Integrals,
  Series, and Products}}}\ (\bibinfo  {publisher} {Academic Press},\ \bibinfo
  {year} {2015})\BibitemShut {NoStop}%
\bibitem [{\citenamefont {Landau}\ and\ \citenamefont
  {Lifshitz}(2004)}]{volume2}%
  \BibitemOpen
  \bibfield  {author} {\bibinfo {author} {\bibfnamefont {L.~D.}\ \bibnamefont
  {Landau}}\ and\ \bibinfo {author} {\bibfnamefont {E.~M.}\ \bibnamefont
  {Lifshitz}},\ }\href@noop {} {\emph {\bibinfo {title} {The Classical Theory
  of Fields}}}\ (\bibinfo  {publisher} {Butterworth-Heinemann},\ \bibinfo
  {year} {2004})\BibitemShut {NoStop}%
\bibitem [{\citenamefont {Bahrami}\ \emph {et~al.}(2024)\citenamefont
  {Bahrami}, \citenamefont {De~Kinder}, \citenamefont {Li},\ and\ \citenamefont
  {Caloz}}]{bahrami2024}%
  \BibitemOpen
  \bibfield  {author} {\bibinfo {author} {\bibfnamefont {A.}~\bibnamefont
  {Bahrami}}, \bibinfo {author} {\bibfnamefont {K.}~\bibnamefont {De~Kinder}},
  \bibinfo {author} {\bibfnamefont {Z.}~\bibnamefont {Li}},\ and\ \bibinfo
  {author} {\bibfnamefont {C.}~\bibnamefont {Caloz}},\ }\bibfield  {title}
  {\bibinfo {title} {Space-time wedges},\ }\href@noop {} {\bibfield  {journal}
  {\bibinfo  {journal} {arxiv:410.23291v1}\ } (\bibinfo {year}
  {2024})}\BibitemShut {NoStop}%
\bibitem [{\citenamefont {Khrabustovskyi}\ \emph {et~al.}(2017)\citenamefont
  {Khrabustovskyi}, \citenamefont {Mnasri}, \citenamefont {Plum}, \citenamefont
  {Stohrer},\ and\ \citenamefont {Rockstuhl}}]{khrabustovskyi2017interface}%
  \BibitemOpen
  \bibfield  {author} {\bibinfo {author} {\bibfnamefont {A.}~\bibnamefont
  {Khrabustovskyi}}, \bibinfo {author} {\bibfnamefont {K.}~\bibnamefont
  {Mnasri}}, \bibinfo {author} {\bibfnamefont {M.}~\bibnamefont {Plum}},
  \bibinfo {author} {\bibfnamefont {C.}~\bibnamefont {Stohrer}},\ and\ \bibinfo
  {author} {\bibfnamefont {C.}~\bibnamefont {Rockstuhl}},\ }\bibfield  {title}
  {\bibinfo {title} {Interface conditions for a metamaterial with strong
  spatial dispersion},\ }\href@noop {} {\bibfield  {journal} {\bibinfo
  {journal} {arXiv preprint arXiv:1710.03676}\ } (\bibinfo {year}
  {2017})}\BibitemShut {NoStop}%
\bibitem [{\citenamefont {Li}\ \emph {et~al.}(2023{\natexlab{b}})\citenamefont
  {Li}, \citenamefont {Ma}, \citenamefont {Bahrami}, \citenamefont
  {Deck-L\'eger},\ and\ \citenamefont {Caloz}}]{li2023}%
  \BibitemOpen
  \bibfield  {author} {\bibinfo {author} {\bibfnamefont {Z.}~\bibnamefont
  {Li}}, \bibinfo {author} {\bibfnamefont {X.}~\bibnamefont {Ma}}, \bibinfo
  {author} {\bibfnamefont {A.}~\bibnamefont {Bahrami}}, \bibinfo {author}
  {\bibfnamefont {Z.-L.}\ \bibnamefont {Deck-L\'eger}},\ and\ \bibinfo {author}
  {\bibfnamefont {C.}~\bibnamefont {Caloz}},\ }\bibfield  {title} {\bibinfo
  {title} {Space-time fresnel prism},\ }\href@noop {} {\bibfield  {journal}
  {\bibinfo  {journal} {Phys. Rev. Appl.}\ }\textbf {\bibinfo {volume} {20}},\
  \bibinfo {pages} {054029} (\bibinfo {year} {2023}{\natexlab{b}})}\BibitemShut
  {NoStop}%
\bibitem [{\citenamefont {Dobrzynski}\ and\ \citenamefont
  {Maradudin}(1972)}]{dobrzynski1972}%
  \BibitemOpen
  \bibfield  {author} {\bibinfo {author} {\bibfnamefont {L.}~\bibnamefont
  {Dobrzynski}}\ and\ \bibinfo {author} {\bibfnamefont {A.~A.}\ \bibnamefont
  {Maradudin}},\ }\bibfield  {title} {\bibinfo {title} {Electrostatic edge
  modes in a dielectric wedge},\ }\href@noop {} {\bibfield  {journal} {\bibinfo
   {journal} {Physical Review B}\ }\textbf {\bibinfo {volume} {6}},\ \bibinfo
  {pages} {3810} (\bibinfo {year} {1972})}\BibitemShut {NoStop}%
\bibitem [{\citenamefont {Pendry}\ \emph {et~al.}(2017)\citenamefont {Pendry},
  \citenamefont {Huidobro}, \citenamefont {Luo},\ and\ \citenamefont
  {Galiffi}}]{pendry2017compacted}%
  \BibitemOpen
  \bibfield  {author} {\bibinfo {author} {\bibfnamefont {J.}~\bibnamefont
  {Pendry}}, \bibinfo {author} {\bibfnamefont {P.~A.}\ \bibnamefont
  {Huidobro}}, \bibinfo {author} {\bibfnamefont {Y.}~\bibnamefont {Luo}},\ and\
  \bibinfo {author} {\bibfnamefont {E.}~\bibnamefont {Galiffi}},\ }\bibfield
  {title} {\bibinfo {title} {Compacted dimensions and singular plasmonic
  surfaces},\ }\href@noop {} {\bibfield  {journal} {\bibinfo  {journal}
  {Science}\ }\textbf {\bibinfo {volume} {358}},\ \bibinfo {pages} {915}
  (\bibinfo {year} {2017})}\BibitemShut {NoStop}%
\bibitem [{\citenamefont {Sommerfeld}(2004)}]{sommerfeld2004}%
  \BibitemOpen
  \bibfield  {author} {\bibinfo {author} {\bibfnamefont {A.}~\bibnamefont
  {Sommerfeld}},\ }\bibfield  {title} {\bibinfo {title} {Mathematical theory of
  diffraction},\ }in\ \href@noop {} {\emph {\bibinfo {booktitle} {Mathematical
  Theory of Diffraction}}}\ (\bibinfo  {publisher} {Springer},\ \bibinfo {year}
  {2004})\ pp.\ \bibinfo {pages} {9--68}\BibitemShut {NoStop}%
\bibitem [{\citenamefont {Malyuzhinets}(1955)}]{malyuzhinets1955}%
  \BibitemOpen
  \bibfield  {author} {\bibinfo {author} {\bibfnamefont {D.}~\bibnamefont
  {Malyuzhinets}},\ }\bibfield  {title} {\bibinfo {title} {Radiation of sound
  from the vibrating faces of an arbitrary wedge [part ii]},\ }\href@noop {}
  {\bibfield  {journal} {\bibinfo  {journal} {Sov. Phys. Acoust}\ }\textbf
  {\bibinfo {volume} {1}},\ \bibinfo {pages} {240} (\bibinfo {year}
  {1955})}\BibitemShut {NoStop}%
\bibitem [{\citenamefont {Osipov}\ and\ \citenamefont
  {Norris}(1999)}]{osipov1999}%
  \BibitemOpen
  \bibfield  {author} {\bibinfo {author} {\bibfnamefont {A.}~\bibnamefont
  {Osipov}}\ and\ \bibinfo {author} {\bibfnamefont {A.}~\bibnamefont
  {Norris}},\ }\bibfield  {title} {\bibinfo {title} {The malyuzhinets theory
  for scattering from wedge boundaries: a review},\ }\href
  {https://doi.org/https://doi.org/10.1016/S0165-2125(98)00042-0} {\bibfield
  {journal} {\bibinfo  {journal} {Wave Motion}\ }\textbf {\bibinfo {volume}
  {29}},\ \bibinfo {pages} {313} (\bibinfo {year} {1999})}\BibitemShut
  {NoStop}%
\bibitem [{\citenamefont {Nethercote}\ \emph {et~al.}(2020)\citenamefont
  {Nethercote}, \citenamefont {Assier},\ and\ \citenamefont
  {Abrahams}}]{nethercote2020}%
  \BibitemOpen
  \bibfield  {author} {\bibinfo {author} {\bibfnamefont {M.}~\bibnamefont
  {Nethercote}}, \bibinfo {author} {\bibfnamefont {R.}~\bibnamefont {Assier}},\
  and\ \bibinfo {author} {\bibfnamefont {I.}~\bibnamefont {Abrahams}},\
  }\bibfield  {title} {\bibinfo {title} {Analytical methods for perfect wedge
  diffraction: A review},\ }\href
  {https://doi.org/https://doi.org/10.1016/j.wavemoti.2019.102479} {\bibfield
  {journal} {\bibinfo  {journal} {Wave Motion}\ }\textbf {\bibinfo {volume}
  {93}},\ \bibinfo {pages} {102479} (\bibinfo {year} {2020})}\BibitemShut
  {NoStop}%
\end{thebibliography}%

\end{document}